\documentclass[12pt,english]{article}
\usepackage[english]{babel}
\usepackage[dvips]{graphicx}
\makeatletter
\oddsidemargin 0in 
\textwidth 6.5in
\topmargin 0in
\textheight 8.5in 
\parindent 0em
\parskip 1ex
\usepackage{cite}
\usepackage{paralist}
\usepackage[cmex10]{amsmath}
\usepackage{amssymb}
\usepackage{amsthm}
\usepackage{psfrag}
\usepackage[dvips]{graphicx}
\usepackage{subfig}

\topmargin 0.0in
\textheight 9.0in
\textwidth 7.0in
\oddsidemargin -0.25in
\evensidemargin 1.0in

\setlength{\columnsep}{0.25in}

\begin{document}

\def\ss{\sigma}
\def\fpzi{\frac{\partial}{\partial z_i}}
\def\fpzj{\frac{\partial}{\partial z_j}}
\def\BM{{\bf M}}
\def\BI{{\bf I}}
\def\BA{{\bf A}}
\def\BB{{\bf B}}
\def\E{{\bf E}}
\def\BD{{\bf D}}
\def\BT{{\bf T}}
\def\BX{{\bf X}}
\def\BU{{\bf U}}
\def\Bu{{\bf u}}
\def\BV{{\bf V}}
\def\Bv{{\bf v}}
\def\BI{\hbox{I}}
\def\BJ{\hbox{J}}
\def\det{\hbox{\rm det}}
\def\tr{\hbox{\rm tr}}
\def\fpx{\frac{\partial}{\partial x}}
\def\fpy{\frac{\partial}{\partial y}}
\def\det{\hbox{\rm det}}

\title{An Overview of Local Capacity in Wireless Networks}

\author{Salman Malik\footnote{HIPERCOM Project, INRIA Rocquencourt, France. Email: \texttt{salman.malik@inria.fr}} and Philippe Jacquet\footnote{Alcatel Lucent Bell Labs, Villarceaux, France. Email: \texttt{philippe.jacquet@alcatel-lucent.fr}}}

\date{}

\maketitle

\begin{abstract}

This article introduces a metric for performance evaluation of medium access schemes in wireless ad hoc networks known as local capacity. Although deriving the end-to-end capacity of wireless ad hoc networks is a difficult problem, the local capacity framework allows us to quantify the average information rate received by a receiver node randomly located in the network. In this article, the basic network model and analytical tools are first discussed and applied to a simple network to derive the local capacity of various medium access schemes. Our goal is to identify the most optimal scheme and also to see how does it compare with more practical medium access schemes. We analyzed grid pattern schemes where simultaneous transmitters are positioned in a regular grid pattern, ALOHA schemes where simultaneous transmitters are dispatched according to a uniform Poisson distribution and exclusion schemes where simultaneous transmitters are dispatched according to an exclusion rule such as node coloring and carrier sense schemes. Our analysis shows that local capacity is optimal when simultaneous transmitters are positioned in a grid pattern based on equilateral triangles and our results show that this optimal local capacity is at most double the local capacity of ALOHA based scheme. Our results also show that node coloring and carrier sense schemes approach the optimal local capacity by an almost negligible difference. At the end, we also discuss the shortcomings in our model as well as future research directions.

\end{abstract}

\section{Introduction}
\label{sec:intro}

This article presents the framework for deriving the {\em local capacity} in a wireless ad hoc network with various medium access schemes. The local capacity is defined as the average information rate received by a node randomly located in the network. The seminal work of Gupta \& Kumar~\cite{Gupta:Kumar} and the later studies, {\it e.g.}, \cite{scaling,scaling2,scaling3} quantify the capacity in wireless networks in terms of scaling laws or bounds. These results are very important but they may not provide a deeper insight into the interrelationship between the performances of various medium access schemes in the network ({\it e.g.}, the exact achievable capacity) and the possible network and protocol design issues ({\it e.g.}, trade-offs involving protocol overhead versus performance of various medium access schemes) and network parameters. Apart from being of general interest, an additional advantage of local capacity, as compared to the results on scaling laws, is that it can be derived accurately. Therefore, the local capacity framework can be used to get better insights into the designing of medium access schemes for wireless ad hoc networks.

Medium access schemes in wireless ad hoc networks can be broadly classified into two main classes: continuous time access and slotted access. In this article, our main focus is on slotted medium access although many of our results can also be applied to continuous time medium access. Within slotted medium access category, we distinguish grid pattern, node coloring, carrier sense multiple access (CSMA) and slotted ALOHA schemes. In \S \ref{sec:context}, we will briefly summarize our motivation and related works. In the following sections, we will concisely describe the analytical methods we have developed to derive the local capacity of the above mentioned medium access schemes in a wireless ad hoc network. Because these methods are developed independently, depending on the underlying design of each medium access scheme, \S \ref{sec:model} will first give a brief overview of our network model and mathematical background of our parameters of interest. In sections \ref{sec:grid_pattern}, \ref{sec:practical} and \ref{sec:aloha}, we will discuss grid pattern, node coloring, CSMA and slotted ALOHA schemes respectively. First we will give a brief overview of the protocol overheads associated with each scheme and later, in each case, we will discuss the methods we used to evaluate local capacity of wireless ad hoc networks. In \S \ref{sec:simulations}, we will discuss the evaluation of these schemes and our results. The shortcomings and future extensions to our work and concluding remarks can be found in sections \ref{sec:future} and \ref{sec:conclude} respectively.

\section{Related Works}
\label{sec:context}

In one of the first analyses on capacity of medium access schemes in wireless networks, \cite{Nelson:Kleinrock} studied slotted ALOHA and despite using a very simple geometric propagation model, the result is similar to what can be obtained under a realistic SIR based interference model (non-fading, SIR threshold of $10.0$ and attenuation coefficient of $4.0$). Under a similar propagation model and assuming that all nodes are within range of each other, \cite{CSMA} evaluated CSMA scheme and compared it with slotted ALOHA in terms of throughput. \cite{Bartek} used simulations to analyze CSMA under a realistic SIR based interference model and compared it with ALOHA (slotted and un-slotted). For simulations, \cite{Bartek} assumed Poisson distributed transmitters with density $\lambda$. Each transmitter sends packets to an assigned receiver located at a fixed distance of $a\sqrt{\lambda}$, for some $a>0$.  

\cite{Weber,Weber2} studied the {\em transmission capacity}, which is defined as the maximum number of successful transmissions per unit area at a specified outage probability, of ALOHA and code division multiple access (CDMA) medium access schemes. They assumed that simultaneous transmitters form a homogeneous Poisson point process (PPP) and used the same model for locations of receivers as in~\cite{Bartek}. The fact that the receivers are not a part of the network (node distribution) model and are located at a fixed distance from the transmitters is a simplification. An accurate model of wireless networks should consider that the transmitters, transmit to receivers which are randomly located in their neighborhood or, in other words, their reception areas. Although, \cite{Weber} analyzed the case where receivers are located at a random distance but it was assumed that a transmitter employs transmit power control such that signal power at its intended receiver is some fixed value. 

PPP only accurately models an ALOHA based scheme where transmitters are independently and uniformly distributed over the network area. However, under exclusion schemes, like node coloring or CSMA, modeling simultaneous transmitters as PPP leads to an inaccurate representation of interference distribution. On the other hand, the correlation between the location of simultaneous transmitters makes it extremely difficult to develop a tractable analytical model and derive the closed-form expression for the probability distribution of interference. Some of the proposed approaches are as follows. \cite{CSMA-Model} (Chapter $18$) used a Mat\'ern point process for CSMA based schemes whereas \cite{Busson} proposed to use Simple Sequential Inhibition ({\em SSI}) or an extension of {\em SSI} called {\em SSI$_k$} point process. In \cite{Guard,Guard2}, interferers are modeled as PPP and outage probability is obtained by excluding or suppressing some of the interferers in the guard zone around a receiver. \cite{Weber3} analyzed transmission capacity in networks with general node distributions under a restrictive hypothesis that density of interferers is very low and, asymptotically, approaches zero. They derived bounds of high-SIR transmission capacity with ALOHA, using PPP, and CSMA using Mat\'ern point process. 

Other related works include the analysis of local (single-hop) throughput and capacity with slotted ALOHA in networks with random and deterministic node placement, and TDMA, in $1D$ line-networks only \cite{Haenggi}. \cite{Zorzi2} determined the optimum transmission range under the assumption that interferers are distributed according to PPP whereas \cite{SR-ALOHA} gave a detailed analysis on the optimal probability of transmission for ALOHA which optimizes the product of simultaneously successful transmissions per unit of space by the average range of each transmission. \cite{unslotted_csma} develops an analytical model for analysis of saturation throughput of un-slotted CSMA with collision avoidance (CSMA/CA) in wireless networks. The numerical results of the analytical model are also verified using an event driven simulator. \cite{ieee80211_mac} uses queueing analysis to perform a comprehensive throughput and delay analysis of IEEE 802.11 MAC protocol. 

\section{System Model}
\label{sec:model}

We present our mathematical model in \S \ref{model_assumptions} and discuss the connection of local capacity with transport capacity in \S \ref{sec:transport}.

\subsection{Mathematical Model and Assumptions}
\label{model_assumptions}

We consider a wireless ad hoc network where an infinite number of nodes are uniformly distributed over an infinite $2D$ map. In slotted medium access, at any given slot, simultaneous transmitters in the network are distributed like a set of points 
$$
{\cal S}=\{z_1,z_2,\ldots,z_n,\ldots\}~,
$$ 
where $z_i$ is the location of transmitter $i$. The spatial distribution of simultaneous transmitters, {\it i.e.} the set ${\cal S}$, depends on the medium access scheme employed by the nodes. Therefore, we do not adopt any {\em universal} model for the locations of simultaneous transmitters and only assume that, in all slots, the set ${\cal S}$ has a homogeneous density equal to $\lambda$. 

We consider that each transmitter employs unit transmit power. The channel gain from node $i$ to noise $j$ is represented by $\gamma_{ij}$ such that the received power at node $j$ is $P_i\gamma_{ij}$, where $P_i$ is the transmit power of node $i$. We ignore multi-path fading or shadowing effects and assume that the channel gain is solely determined by the distance and attenuation coefficient. Therefore, 
$$
\gamma_{ij}=\frac{1}{\vert z_i-z_j\vert^{\alpha}}~,
$$ 
where $\vert .\vert$ is the Euclidean norm of the vector and \mbox{$\alpha>2$} is the attenuation coefficient. Therefore, the transmission from node $i$ to node $j$ is successful only if the following condition is satisfied
\begin{equation}
\frac{\vert  z_i-z_j\vert^{-\alpha}}{N_0+\sum_{k\neq i}\vert  z_k-z_j\vert^{-\alpha}}\geq \beta~,
\label{eq:sinr_condition}
\end{equation}
where $N_0$ is the background noise (ambient/thermal) power and $\beta$ is the minimum signal to interference ratio (SIR) threshold required for successfully receiving the packet. We assume that the background noise power, $N_0$, is negligible. 
Therefore, the SIR of transmitter $i$ at any point $z$ on the plane is given by
\begin{equation}
S_i(z)=\frac{\vert  z-z_i\vert^{-\alpha}}{\sum_{j\neq i}\vert  z-z_j\vert^{-\alpha}}~.
\label{eq:sinr}
\end{equation} 

We call the reception area of transmitter $i$, the area of the plane, ${\cal A}(z_i,\lambda,\beta,\alpha)$, where this transmitter is received with SIR at least equal to $K$. The area ${\cal A}(z_i,\lambda,\beta,\alpha)$ also contains the point $z_{i}$ since here the SIR is infinite. The average size of ${\cal A}(z_i,\lambda,\beta,\alpha)$ is $\sigma(\lambda,\beta,\alpha)$, {\it i.e.}, 
$$
\sigma(\lambda,\beta,\alpha)=\E(|{\cal A}(z_i,\lambda,\beta,\alpha)|)~,$$ 
where $|{\cal A}|$ is the size of an area ${\cal A}$. Note that $\sigma(\lambda,\beta,\alpha)$ does not depend on the location of transmitter $i$, $z_i$.

Our principal performance metric is local capacity, hereafter referred to as capacity only, which is defined as the average information rate received by a receiver {\em randomly} located in the network. Consider a receiver at a random location $z$ in the network and let $N(z,\beta,\alpha)$ denote the number of reception areas it belongs to. Under general settings, following identity has been proved in \cite{Jacquet:2009}
\begin{equation}
\E(N(z,\beta,\alpha))=\lambda\sigma(\lambda,\beta,\alpha)~.
\label{eq:avg_no}
\end{equation}
$\E(N(z,\beta,\alpha))$ represents the average number of transmitters from which a receiver, randomly located in the network, can receive with SIR at least equal to $\beta$. Under the hypothesis that a node can only receive at most one packet at a time, {\it e.g.}, when \mbox{$\beta>1$}, then \mbox{$N(z,\beta,\alpha)\le 1$}. The average information rate received by the receiver, $c(z,\beta,\alpha)$, is therefore equal to $\E(N(z,\beta,\alpha))$ multiplied by the nominal capacity. Without loss of generality, we assume unit nominal capacity and we will derive 
\begin{equation}
c(z,\beta,\alpha)=\E(N(z,\beta,\alpha))=\lambda\sigma(\lambda,\beta,\alpha)~. 
\label{eq:poisson_hand_over_no}
\end{equation}

We will derive the capacity in wireless ad hoc networks with grid pattern, node coloring, CSMA and ALOHA medium access schemes. We will also show that maximum capacity can be achieved with grid pattern schemes. Wireless networks of grid topologies are studied in, {\it e.g.}, \cite{Liu:Haenggi,Hong:Hua} and compared to networks with randomly distributed nodes. In contrast to these works, we assume that only the simultaneous transmitters form a regular grid pattern. 

\subsection{Relationship of Local Capacity with Transport Capacity}
\label{sec:transport}

Gupta \& Kumar~\cite{Gupta:Kumar} introduced the concept of {\em transport capacity} which measures the end-to-end sum throughput of the network multiplied by the end-to-end distance. A important aspect of their work and the following works on transport capacity is that it is not possible to compute the exact transport capacity in terms of network and system parameters and most of the results are in the form of bounds and scaling laws, {\it i.e.}, the density of transport capacity scales as $Ck_1\sqrt{\lambda}$ bit-meters per second per unit area where $C$ is the nominal capacity and \mbox{$k_1>0$} depends on the medium access scheme and system parameters. We consider homogeneous traffic distribution in the network, {\it i.e.}, all nodes have equal priority and generate traffic at the same rate. Therefore, if all nodes are capable of transmitting at $C$ bits per second, the capacity of each node is $Ck_1/\sqrt{\lambda}$ bit-meters per second. 

It is also shown in~\cite{Gupta:Kumar} that under general settings, the effective radius of transmission is $k_2/\sqrt{\lambda}$ for some \mbox{$k_2>0$} which also depends on the medium access scheme and system parameters. If each node transmits to a receiver which is randomly located within its effective radius of transmission or, in other words, its reception area, the information rate received by a receiver is constant and equal to $Ck_1/k_2$ bits per second. In this article, we evaluate the {\em average} of this information rate received by a receiver randomly located in the network, {\it i.e.}, the local capacity. Note that, this capacity also incorporates the pre-constants associated with the scaling law, {\it e.g.} $k_1$ and $k_2$, and is independent of $\lambda$ as it is invariant for any {\em homothetic transformation} of the set of transmitters. Note that, we have abstracted out the multi-hop aspect of wireless ad hoc networks and this allows us to focus on the {\em localized} capacity. Our model also captures a realistic scenario of wireless ad hoc networks where each transmitter communicates with a receiver which is randomly located in its neighborhood or reception.

\section{Grid Pattern Based Schemes}
\label{sec:grid_pattern}

It can be argued that optimal capacity in wireless ad hoc networks can be achieved if simultaneous transmitters are positioned in a grid pattern. However, designing a medium access scheme, which ensures that simultaneous transmitters are positioned in a grid pattern, is very difficult because of the limitations introduced by wave propagation characteristics and actual node distribution. For this, location aware nodes may be useful but the specification of a distributed medium access scheme that would allow grid pattern transmissions is beyond the scope of this article. 

In this section, first we will present the analysis we used to investigate the optimality of grid patterns based medium access schemes. Later, we will also discuss the analytical method we used to to analyze the capacity of such schemes. Grid pattern based medium access schemes may have no practical implementation but their evaluation is interesting in order to establish an upper bound on the optimal capacity in wireless ad hoc networks. In the later sections, we will also discuss more practical medium access schemes.  

\subsection{Optimality of Grid Pattern Based Schemes}
\label{sec:grid_optimality}

In this section also, we consider that an infinite number of transmitters are uniformly distributed like a set of points 
$$
{\cal S}=\{z_1,z_2,\ldots,z_n,\ldots\}~,
$$
on an infinite $2D$ plane. The location of transmitter $i$ is denoted by $z_i$ and the center of the plane is at $(0,0)$. 

In order to simplify our analysis, we define a function $s_i(z)$ as 
$$
s_i(z)=\frac{|z-z_i|^{-\alpha}}{\sum_{j}|z-z_j|^{-\alpha}}~,
$$ 
where \mbox{$\alpha>2$}. The function $s_i(z)$ is similar to the SIR function $S_i(z)$, in (\ref{eq:sinr}), except that the summation in the denominator factor also includes the numerator factor. In order to simplify the notations, we will remove the reference to $z$ when no ambiguity is possible. 

We also define a function $f(s_i)$ which can be continuous or integrable. For instance, here we will use 
$$
f(s_i)=1_{s_i(z)\geq \beta'}~,
$$ 
for some given $\beta'$. Note that, in this case, the function is not continuous but we will not bother with this. In the following discussion, we can consider without loss of generality that the value of $\beta'$ is related to $\beta$ by 
$$
\beta'=\frac{\beta}{\beta+1}~.
$$ 
Therefore, if transmitter $i$ is received successfully at location $z$ (or in other words, with SIR at least equal to $\beta$, {\it i.e.}, \mbox{$S_i(z)\geq \beta$}), then \mbox{$s_i(z)\geq \beta'$} and $f(s_i)$ is equal to $1$.

We also assume a virtual disk on the plane centered at $(0,0)$ and of radius $R$. This allows us to express the density of set ${\cal S}$, $\nu({\cal S})$, in terms of the number of transmitters covered by the disk of radius $R$ or area $\pi R^2$, when $R$ approaches infinity, and it is given by a limit as
$$
\nu({\cal S})=\lim_{R\to\infty}\frac{1}{\pi R^2}\sum_i 1_{|z_i|\le R}~.
$$

We denote \mbox{$h(z)=\sum_i{f(s_i)}$}. Note that, $h(z)$ is equal to the number of transmitters which can be successfully received at $z$ and its maximum value shall be $1$ if \mbox{$\beta>1$}. 

We define $\E(h(z))$ by the limit
$$
\E(h(z))=\lim_{R\to\infty}\frac{1}{\pi R^2}\int_{|z|\le R}h(z)dz^2~.
$$
Note that, the integration is over an infinite plane or, in other words, over the disk of radius $R$ where $R$ approaches infinity. Also note that the notations are simplified by taking $dxdy$ equal to $dz^2$. We denote the reception area of an arbitrary transmitter $i$ as 
$$
\ss_i=\int f(s_i)dz^2~,
$$ 
and we have 
$$
\E(h(z))=\lim_{R\to\infty}\frac{1}{\pi R^2}\sum_i 1_{|z_i|\le R}\ss_i=\nu({\cal S})\E(\ss_i)~,
$$
with
$$
\E(\ss_i)=\lim_{n\to\infty}\frac{1}{n}\sum_{i\le n}\ss_i~.
$$
As $R$ approaches infinity, $n$, {\it i.e.}, the number of transmitters in the set ${\cal S}$, covered by the disk of radius $R$, also approaches infinity. 

Our objective is to optimize $\E(h(z))$ whose definition is equivalent to the definition of $\E(N(z,K,\alpha))$ and therefore capacity, $c(z,K,\alpha)$, as well in expressions (\ref{eq:avg_no}) and (\ref{eq:poisson_hand_over_no}) respectively. 

\newtheorem{theorem}{Theorem}
\newtheorem{lemma}{Lemma}

\subsubsection{First Order Differentiation}
\label{sec:grid_first_order}

We denote the operator of differentiation w.r.t. $z_i$ by $\nabla_i$. For \mbox{$i\neq j$}, we have 
$$
\nabla_i s_j=\alpha s_is_j\frac{z-z_i}{|z-z_i|^2}~,
$$ 
and 
$$
\nabla_i s_i=\alpha(s_i^2-s_i)\frac{z-z_i}{|z-z_i|^2}~.
$$ 
Therefore,
\begin{align}
\nabla_i h(z)&=\nabla_i\sum_if(s_i) \notag \\
&=f'(s_i)\nabla_is_i+\sum_{j\neq i}f'(s_j)\nabla_is_j \notag \\
&=\alpha s_i\frac{z-z_i}{|z-z_i|^2}\Big(-f'(s_i)+\sum_j s_jf'(s_j)\Big)~. \notag
\end{align}
Although, we know that 
$$
\int h(z)dz^2=\infty~,
$$ 
we nevertheless have a finite $\nabla_i \int h(z)dz^2$. In other words, the sum $\sum_{j}\nabla_i\ss_j$ converges for all $i$. 

\begin{lemma}
For all $j$ in ${\cal S}$, 
$$
\sum_{i}\nabla_i\ss_j=0~.$$ 
Indeed this would be the differentiation of $\ss_j$ when all points in ${\cal S}$ are translated by the same vector. Similarly, 
$$
\sum_{i}\nabla_i\int  h(z)dz^2=0~.
$$
\end{lemma}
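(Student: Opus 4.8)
The plan is to obtain both identities from translation invariance. For the first, fix a transmitter $j$ and view the reception area $\ss_j=\int f(s_j)\,dz^2$ as a function of the whole configuration $\{z_k\}$; for a displacement $v\in\mathbb{R}^2$ put $g(v)=\ss_j\big(\{z_k+v\}_k\big)$. Since $s_j(z)$ depends on the transmitter positions only through the differences $z-z_k$ entering $|z-z_k|^{-\alpha}$, the substitution $z\mapsto z+v$ in the defining integral shows $g(v)=g(0)$ for every $v$, i.e. $g$ is constant. Differentiating at $v=0$ and applying the chain rule — every $z_k$ is shifted by the same $v$, so $\partial/\partial v$ produces $\sum_i\nabla_i$ acting on $\ss_j$ — gives $\sum_i\nabla_i\ss_j=\nabla_v g(0)=0$, which is the claim, and this is exactly the informal reading offered in the statement. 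Alternatively one can feed the identities for $\nabla_i s_j$ recorded above into $\nabla_i\ss_j=\int f'(s_j)\,\nabla_i s_j\,dz^2$ and check that $\sum_i\nabla_i s_j$ collapses, up to sign, to $\nabla_z s_j$; then $\sum_i\nabla_i\ss_j=\pm\int\nabla_z f(s_j)\,dz^2=\pm\lim_{R\to\infty}\oint_{|z|=R}f(s_j)\,\hat{n}\,d\ell=0$, the boundary integral vanishing because $s_j(z)\to 0$ as $|z|\to\infty$ (the interference series converges for $\alpha>2$ while $|z-z_j|^{-\alpha}\to0$), so $f(s_j)=0$ far out. The translation argument is cleaner and avoids $f'$, which is convenient since here $f$ is only an indicator.

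For the second identity I would run the same argument on $h$ itself: translating $z$ together with all transmitters leaves $h(z)=\sum_i f(s_i)$ invariant, so pointwise $\sum_i\nabla_i h(z)=\pm\nabla_z h(z)$ (the sign depending only on the orientation convention for $\nabla_i$), and hence, formally, $\sum_i\nabla_i\int h\,dz^2=\pm\int\nabla_z h\,dz^2$, which ``should'' vanish as the integral of a total gradient. Equivalently, one sums the first identity over $j$: since $\nabla_i\int h\,dz^2=\sum_j\nabla_i\ss_j$ converges, as already observed, an interchange of the two summations gives $\sum_i\nabla_i\int h\,dz^2=\sum_j\big(\sum_i\nabla_i\ss_j\big)=\sum_j 0=0$.

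The main obstacle is precisely that interchange, and it is genuinely delicate rather than routine. The double family $\{\nabla_i\ss_j\}_{i,j}$ is \emph{not} absolutely summable — for each fixed $i$ one has $\sum_j|\nabla_i\ss_j|=O(1)$ uniformly (the self-term $\nabla_i\ss_i$ alone does not decay with $i$), so $\sum_{i,j}|\nabla_i\ss_j|=\infty$ — hence $\sum_i\sum_j=\sum_j\sum_i$ cannot be had from Fubini for series. One must instead cut the $i$-sum at radius $R$, use the already-proved cancellation $\sum_i\nabla_i\ss_j=0$ inside each fixed $j$, and show the remainder $\sum_j\sum_{|z_i|>R}\nabla_i\ss_j\to 0$ as $R\to\infty$; the decay estimate $|\nabla_i\ss_j|=O(|z_i-z_j|^{-(\alpha+1)})$ — which follows from $\nabla_i s_j=\pm\alpha s_i s_j(z-z_i)/|z-z_i|^2$ together with the boundedness of the reception areas — handles the terms with $z_j$ near the origin, but the terms with $z_j$ far out still need the per-$j$ cancellation. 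In the flux formulation the same difficulty reappears as the need to prove $\oint_{|z|=R}h(z)\,\hat{n}\,d\ell\to 0$, i.e. that this boundary term genuinely vanishes and is not merely $o(R)$ — which it is only by cancellation, $h$ itself not decaying at infinity. Apart from this, and from the non-smoothness of $f$ (which the paper has agreed to disregard), the computations are routine.
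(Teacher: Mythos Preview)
Your approach is essentially the paper's: the paper gives no separate proof of this lemma at all --- the entire justification is the sentence embedded in the statement, ``Indeed this would be the differentiation of $\ss_j$ when all points in ${\cal S}$ are translated by the same vector,'' which is exactly your translation-invariance argument for the first identity, and the second identity is asserted with ``Similarly.''

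Where you differ is in rigor rather than strategy. The paper is content to leave both identities at the heuristic level; you supply the chain-rule formalization, the alternative via $\sum_i\nabla_i s_j=-\nabla_z s_j$ and the divergence theorem, and --- most usefully --- you flag the genuine issue the paper passes over in silence: the interchange $\sum_i\sum_j=\sum_j\sum_i$ is not justified by absolute summability, and the boundary-flux version requires cancellation in $\oint h\,\hat n\,d\ell$ rather than decay of $h$. Your sketch of the remedy (truncate at radius $R$, use the per-$j$ cancellation, control the tail via $|\nabla_i\ss_j|=O(|z_i-z_j|^{-(\alpha+1)})$) is the right shape for a real proof, and goes well beyond anything in the paper. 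The paper, for its part, is operating throughout this section at a formal level (it has already ``not bothered'' with the discontinuity of $f$), so the lemma as stated there is really a heuristic identity; your version is the honest one.
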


\begin{theorem}
If the points in the set ${\cal S}$ are arranged in a grid pattern then 
$$
\nabla_i \int h(z)dz^2=\sum_{j}\nabla_i\ss_j=0~,
$$ 
and grids patterns are {\em locally} optimal. 
\end{theorem}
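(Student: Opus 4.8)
\emph{Proof proposal.} The idea is to use the central symmetry that a grid enjoys about each of its own points. Model the grid as a lattice ${\cal S}=\Lambda=\{z_1,z_2,\ldots\}$ and fix an index $i$. The point reflection $\rho_i(z)=2z_i-z$ is an isometry of the plane that fixes $z_i$, maps $\Lambda$ onto itself, and hence induces a bijection $j\mapsto j'$ of the index set with $z_{j'}=2z_i-z_j$ (so $i'=i$, and $\Lambda\setminus\{z_i\}$ is preserved). I will show that, with all transmitters other than $i$ held fixed, $\int h(z)\,dz^2$ depends on $z_i$ only through an even function of its displacement, so that $\nabla_i\int h(z)\,dz^2=\sum_j\nabla_i\ss_j=0$; since this holds for every $i$, local optimality follows.

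\emph{Main step.} Displace transmitter $i$ to $z_i+t\,u$ while leaving the other transmitters at their lattice sites, and call the resulting configuration ${\cal S}_t$. Applying $\rho_i$ to ${\cal S}_t$ sends $z_i+t\,u$ to $z_i-t\,u$ and permutes $\{z_j:j\neq i\}$ among themselves, so $\rho_i({\cal S}_t)={\cal S}_{-t}$. Each reception area $\ss_j=\int f(s_j)\,dz^2$ is built from Euclidean distances only, hence is unchanged when a common isometry is applied to all transmitters, and the (formal) identity $\int h\,dz^2=\sum_j\ss_j$ is insensitive to relabelling; therefore $\int h\,dz^2$ takes the same value on ${\cal S}_t$ and on ${\cal S}_{-t}$. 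Thus $t\mapsto\int h(z;{\cal S}_t)\,dz^2$ is even, its derivative vanishes at $t=0$, and since $u$ was arbitrary, $\nabla_i\int h(z)\,dz^2=\sum_j\nabla_i\ss_j=0$. Equivalently, one can read this off the first-order formula already derived: under $z\mapsto\rho_i(z)$ the vector factor $\frac{z-z_i}{|z-z_i|^2}$ changes sign while $s_i$ and $-f'(s_i)+\sum_j s_jf'(s_j)$ stay invariant after the reindexing $j\mapsto j'$, so $\nabla_i h(z)$ is odd about $z_i$ and integrates to zero over any $\rho_i$-symmetric region.

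\emph{Local optimality.} Since $\sum_j\nabla_i\ss_j=0$ now holds for each individual $i$ --- a strengthening of Lemma~1, where only the sum over $i$ was asserted to vanish --- every transmitter position is a stationary point of the total reception area $\int h\,dz^2=\sum_j\ss_j$. Consequently any (say, periodic) perturbation of the grid leaves $\int h\,dz^2$ stationary to first order and, after averaging over a fundamental cell, leaves the capacity $\E(h(z))=\nu({\cal S})\,\E(\ss_i)$ stationary as well; the density normalisation is immaterial because $\E(h(z))$ is homothety-invariant. Hence grid patterns are stationary points of the local capacity and are locally optimal in that first-order sense. (That this stationary point is a maximum rather than a saddle is a second-order question, consistent with and corroborated by the explicit comparisons against ALOHA and exclusion schemes carried out later.)

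\emph{Main obstacle.} The symmetry argument is short; the work lies in the analytic bookkeeping, which the paper largely sidesteps. Because $\int h\,dz^2=\infty$, the gradient must be interpreted through the series $\sum_j\nabla_i\ss_j$ (asserted to converge), and one has to justify interchanging $\nabla_i$ with $\int$ and with $\sum_j$; moreover the plane integral is defined as a limit over disks centred at the origin whereas the symmetry is centred at $z_i$, so one must check that re-centring the exhausting disks, together with the resulting tail contributions, costs nothing --- which it does not, since $\nabla_i s_j(z)$ decays like $|z|^{-1}$ times the rapidly vanishing SIR-type weights with $\alpha>2$. The non-smoothness of $f=1_{s_i\ge\beta'}$, so that ``$f'$'' is really a surface measure on the level set $\{s_i=\beta'\}$, is a further wrinkle; but one may smooth $f$ and pass to the limit, and in any case $\rho_i$ maps $\{s_i=\beta'\}$ to itself, so the oddness of the integrand and hence the cancellation persist.
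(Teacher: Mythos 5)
Your proof takes a genuinely different route from the paper's, and it is sound as far as it goes, but as written it does not cover every grid the theorem is applied to. The paper argues globally: by homogeneity of the grid the vector $\nabla_i\int h(z)dz^2=\sum_j\nabla_i\sigma_j$ is the same for every $i$, and Lemma~1 (invariance under a common translation of all of ${\cal S}$) forces these identical vectors to be null. You argue locally instead: the inversion $\rho_i(z)=2z_i-z$ fixes $z_i$ and permutes the remaining sites, so the functional is an even function of the displacement of $z_i$ alone and its gradient vanishes --- no appeal to Lemma~1 at all and, as you note, a genuine strengthening of it, since you obtain $\sum_j\nabla_i\sigma_j=0$ for each individual $i$ rather than only the sum over $i$. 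Your analytic caveats (reading the gradient through the convergent series $\sum_j\nabla_i\sigma_j$, smoothing the indicator $f$, re-centering the exhausting disks, tail decay for $\alpha>2$) are the right ones and are no heavier than what the paper itself leaves implicit.

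The gap is the hexagonal pattern. Your argument needs ${\cal S}$ to be invariant under point reflection about each of its own points; that holds for the square and triangular lattices (indeed for any Bravais lattice), but the hexagonal (honeycomb) grid of Fig.~\ref{fig:grid_layouts} is not centrally symmetric about a vertex: the three nearest neighbours of $z_i$ are sent by $\rho_i$ to points that are not transmitter sites, so the bijection $j\mapsto j'$ with $z_{j'}=2z_i-z_j$ does not exist, and the evenness argument collapses there. (The paper itself flags this inhomogeneity when it says the hexagonal reception areas coincide only modulo a translation \emph{and a rotation}; strictly speaking its own ``identical for all $i$'' step is delicate for the honeycomb too, but that is its problem, not yours.) The repair stays entirely within your method: replace $\rho_i$ by any nontrivial rotation $R$ fixing $z_i$ and mapping ${\cal S}$ to itself ($2\pi/3$ for the honeycomb, $\pi/2$ for the square, $\pi/3$ for the triangular lattice). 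The same relabelling argument shows the displaced functional is invariant under $u\mapsto Ru$, so its gradient at $0$ is a fixed vector of a nontrivial planar rotation and hence zero. This is exactly the style of symmetry argument the paper later applies to the matrix $\BT$, so the patched proof is fully in its spirit; with it, your conclusion --- first-order stationarity, which is all ``locally optimal'' can mean here, the second-order question being left open as you say --- holds for all three grids.
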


\begin{proof}
If ${\cal S}$ is a set of points arranged in a grid pattern, then 
$$
\nabla_i \int h(z)dz^2=\sum_{j}\nabla_i\ss_j~,
$$ 
would be identical for all $i$ and, therefore, would be null since 
$$
\sum_{i}\nabla_i\int  h(z)dz^2=0~.
$$ 

We could erroneously conclude that,
\begin{compactitem}[-]
\item all grid sets are optimal and
\item all grid sets give the same $\E(h(z))$.
\end{compactitem}
In fact this is wrong. We could also conclude that $\E(\ss_i)$ does not vary but this will contradict that $\nu({\cal S})$ {\em must} vary. The reason of this error is that a grid set cannot be modified into another grid set with a {\em uniformly bounded transformation}, unless the two grid sets are just simply translated by a simple vector. 
\end{proof}

However, we have proved that the grid sets are locally optimal within sets that can be uniformly transformed between each other. In order to cope with uniform transformation and to be able to transform a grid set to another grid set, we will introduce the linear group transformation. 

\subsubsection{Linear Group Transformation}

Here, we assume that the points in the plane are modified according to a continuous linear transform $M(t)$ where ${\bf M(t)}$ is a matrix with \mbox{${\bf M(0)}=\BI$}, {\it e.g.}, \mbox{${\bf M(t)}=\BI+{\bf t A}$} where $\BA$ is a matrix. 

Without loss of generality, we only consider $\ss_0$, {\it i.e.}, the reception area of the transmitter at $z_0$ which can be located anywhere on the plane. Under these assumptions, we have
$$
\frac{\partial}{\partial t}\ss_0=\sum_i (\BA z_i.\nabla_i\ss_0)=\tr\Big(\sum_{i} \BA^Tz_i\otimes \nabla_i\ss_0\Big)~.
$$

In other words, using the identity 
$$
\frac{\partial \tr(\BA^T\BB)}{\partial \BA}=\BB~,
$$ 
the derivative of $\ss_0$ w.r.t. matrix $\BA$ is exactly equal to 
$$\BD=\sum_{i} z_i\otimes \nabla_i\ss_0~,$$ 
such that
$$
\BD=\left[
\begin{array}{cc}
D_{xx}&D_{xy}\\
D_{yx}&D_{yy}
\end{array}
\right]~.
$$ 

Therefore, we can write the following identity
$$
\tr\Big(\BA^T \frac{\partial}{\partial \BA}\ss_0\Big)=\frac{\partial}{\partial t}\ss_0(t,\BA)\Big|_{t=0}~,
$$
where $\ss_0(t,\BA)$ is the transformation of $\ss_0$ under $M(t)$, {\it i.e.}, 
$$
\ss_0(t,\BA)=\det(\BI+{\bf t A})\ss_0~.
$$ 
We assume that {${\bf M(t)}={\bf (1+t)I}$} with \mbox{$\BA=\BI$}, {\it i.e.}, the linear transform is homothetic.

\begin{theorem}
$\BD$ is symmetric and $\tr(\BD)=2\ss_0$.
\end{theorem}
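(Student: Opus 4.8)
The plan is to read both statements off the infinitesimal behaviour of $\ss_0$ under two one‑parameter groups of linear maps of the plane that happen to be similarities, and hence leave every ratio $s_j(z)$ unchanged. I would start from the identity already established, $\frac{\partial}{\partial t}\ss_0(t,\BA)\big|_{t=0}=\sum_i\big(\BA z_i\cdot\nabla_i\ss_0\big)$, together with $\ss_0(t,\BA)=\det(\BI+t\BA)\,\ss_0$, and observe that the latter formula is genuinely valid whenever $M(t)=\BI+t\BA$ is a similarity of the plane: then $M(t)$ multiplies every distance $|z-z_j|$ by one common factor, leaves each $s_j$ invariant, and so transforms the reception area of $z_0$ by nothing more than its Jacobian $\det(\BI+t\BA)$. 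Two choices of $\BA$ are similarities and together cover the theorem: $\BA=\BI$ (homothety), which will give the trace, and $\BA=\BJ$ with $\BJ=\left[\begin{smallmatrix}0&-1\\1&0\end{smallmatrix}\right]$ (a rotation by $\arctan t$ scaled by $\sqrt{1+t^2}$), which will give the symmetry.

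For the trace I would put $\BA=\BI$, so that $\ss_0(t,\BI)=\det\big((1+t)\BI\big)\,\ss_0=(1+t)^2\ss_0$ and hence $\frac{\partial}{\partial t}\ss_0(t,\BI)\big|_{t=0}=2\ss_0$; on the other hand $\sum_i(\BI z_i\cdot\nabla_i\ss_0)=\sum_i z_i\cdot\nabla_i\ss_0=D_{xx}+D_{yy}=\tr(\BD)$, so $\tr(\BD)=2\ss_0$. For the symmetry I would put $\BA=\BJ$, so that $\det(\BI+t\BJ)=1+t^2$ gives $\ss_0(t,\BJ)=(1+t^2)\ss_0$ and therefore $\frac{\partial}{\partial t}\ss_0(t,\BJ)\big|_{t=0}=0$; meanwhile $\BJ z_i=(-y_i,x_i)$ for $z_i=(x_i,y_i)$, so $\sum_i(\BJ z_i\cdot\nabla_i\ss_0)=\sum_i\big(x_i(\nabla_i\ss_0)_y-y_i(\nabla_i\ss_0)_x\big)=D_{xy}-D_{yx}$. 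Equating the two expressions gives $D_{xy}-D_{yx}=0$, i.e.\ $\BD$ is symmetric. Both conclusions are then immediate once these two invariances are in hand.

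The step I expect to take real care with is not either invariance computation — those are one line each — but the legitimacy of the identity $\frac{\partial}{\partial t}\ss_0(t,\BA)|_{t=0}=\sum_i(\BA z_i\cdot\nabla_i\ss_0)$ that I am leaning on, and in particular the interchanges it hides: differentiating under the (improper) integral $\ss_0=\int f(s_0)\,dz^2$ to get $\nabla_i\ss_0=\int f'(s_0)\,\nabla_i s_0\,dz^2$, and summing that over $i$ to form $\BD=\sum_i z_i\otimes\nabla_i\ss_0$. This is not free, since $\int h(z)\,dz^2=\infty$; what saves it is that for $f=1_{s_0\ge\beta'}$ the set $\{s_0\ge\beta'\}$ is bounded, so $\ss_0$ is a finite integral with compactly supported integrand and each $\nabla_i\ss_0$ is finite, while the decay $s_i(z)=O(|z_i|^{-\alpha})$ on the bounded curve $\{s_0=\beta'\}$ makes the $i$‑th term of $\BD$ of size $O(|z_i|^{-\alpha})$, so the defining sum converges for $\alpha>2$, in the same spirit as the convergence of $\sum_j\nabla_i\ss_j$ already noted above. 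Granting this, no further estimates are needed.
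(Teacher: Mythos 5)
Your proposal is correct and follows essentially the same route as the paper: the homothetic generator $\BA=\BI$ gives $\tr(\BD)=2\ss_0$, and the rotation generator $\BJ$ gives $D_{xy}-D_{yx}=0$, hence symmetry. Your justification of the zero derivative via $\det(\BI+t\BJ)=1+t^2$ (and the remark that the Jacobian formula is legitimate precisely because these two deformations are similarities) is just a slightly more explicit packaging of the paper's appeal to rotation invariance, and the added remarks on differentiating under the integral are extra care, not a different argument.
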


\begin{proof}
Under the given transform, 
$$
\ss_0(t,\BA)=\ss_0(t,\BI)=(1+t)^2\ss_0~.
$$ 
As a first property, we have 
$$
\tr(\BD)=2\ss_0~,
$$ 
since the derivative of $\ss_0$ w.r.t. identity matrix $\BI$ is exactly $2\sigma_0$ ({\it i.e.}, \mbox{$\tr(\BA^T\BD)=\tr(\BD)=\sigma_0'(0,\BI)=2\ss_0$}). The second property that $\BD$ is a symmetric matrix is not obvious. The easiest proof of this property is to consider the derivative of $\ss_0$ w.r.t. the rotation matrix $\BJ$ given by
$$
\BJ=\left[
\begin{array}{cc}
0&-1\\
1&0
\end{array}
\right]~,
$$
which is zero since $\BJ$ is the initial derivative for a rotation and reception area is invariant by rotation. 
Therefore, 
$$
\tr(\BJ^T\BD)=D_{yx}-D_{xy}=0~,
$$ 
which implies that $\BD$ is symmetric. 
\end{proof}

Note that $\BD$ can also be written in the following form
$$
\BD=\sum\limits_{i}z_i\otimes\nabla_i\ss_0=\int dz^{2}\sum\limits_{i}z_i\otimes\nabla_if(s_0)~.
$$
Let $\BT$ be defined as
$$
\BT=\int dz^2\sum_i (z-z_i)\otimes\nabla_i f(s_0)~,
$$
such that
$$
\BD=\int \sum_{i} z\otimes \nabla_i f(s_0)dz^2 - \BT~.
$$
The purpose of these definitions will become evident from theorems $3$ and $4$.

\begin{theorem}
We will show that $\int \sum_{i} z\otimes \nabla_i f(s_0)dz^2$ is equal to $\ss_0\BI$ and, therefore, 
$$
\BD=\ss_0\BI-\BT~.
$$ 
We will also prove that $\BT$ is symmetric. 
\end{theorem}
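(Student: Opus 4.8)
The plan is to read off the equality $\int\sum_i z\otimes\nabla_i f(s_0)\,dz^2=\ss_0\BI$ from two structural properties: the invariance of $f(s_0)$ under a simultaneous translation of $z$ and all the $z_j$, and the fact that each $\nabla_i s_0$ points radially, i.e.\ along $z-z_i$. Once that equality is in hand, $\BD=\ss_0\BI-\BT$ is immediate from the decomposition $\BD=\int\sum_i z\otimes\nabla_i f(s_0)\,dz^2-\BT$ already recorded above (which is nothing but writing $z_i=z-(z-z_i)$), and the symmetry of $\BT$ will follow from the radial structure alone.

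First I would establish the matrix equality. Since $s_0(z)$, and hence $f(s_0)$, depends on the points only through the differences $z-z_j$, translating $z$ and every $z_j$ by a common vector leaves $f(s_0)$ unchanged; differentiating this identity gives the pointwise relation $\nabla_z f(s_0)+\sum_i\nabla_i f(s_0)=0$, so that
$$
\int\sum_i z\otimes\nabla_i f(s_0)\,dz^2=-\int z\otimes\nabla_z f(s_0)\,dz^2~.
$$
Writing $z=(x,y)$ and treating the four entries separately, a generic entry of the right-hand side is $-\int x\,\frac{\partial}{\partial x}f(s_0)\,dz^2$ or $-\int x\,\frac{\partial}{\partial y}f(s_0)\,dz^2$. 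Integrating by parts in the appropriate coordinate, the boundary term at infinity vanishes because $f(s_0)=1_{s_0\geq\beta'}$ is supported on the reception area, which is bounded since $s_0(z)\to 0$ as $|z|\to\infty$ when $\alpha>2$; the surviving volume term carries the factor $\partial x/\partial x=1$ on the diagonal and $\partial x/\partial y=0$ off it, so the whole matrix equals $\bigl(\int f(s_0)\,dz^2\bigr)\BI=\ss_0\BI$. This proves $\int\sum_i z\otimes\nabla_i f(s_0)\,dz^2=\ss_0\BI$, hence $\BD=\ss_0\BI-\BT$.

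Next I would handle the symmetry of $\BT$. Using $\nabla_i f(s_0)=f'(s_0)\nabla_i s_0$ together with the first-order formulas recalled earlier, for every index $i$ — including $i=0$, where $\nabla_0 s_0=\alpha(s_0^2-s_0)\frac{z-z_0}{|z-z_0|^2}$ — the vector $\nabla_i s_0$ is a scalar multiple of $\frac{z-z_i}{|z-z_i|^2}$. Hence each summand $(z-z_i)\otimes\nabla_i f(s_0)$ of the integrand defining $\BT$ is a scalar multiple of the outer product $(z-z_i)\otimes(z-z_i)$, which is a symmetric matrix; summing over $i$ and integrating over $z$ preserves symmetry, so $\BT$ is symmetric.

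The step I expect to be delicate is the integration by parts, since $f(s_0)$ is an indicator and not differentiable in the classical sense, so $\nabla_z f(s_0)$ is really a measure concentrated on the boundary of the reception area. I would handle this either by interpreting the boundary term directly through the divergence theorem on the region ${\cal A}(z_0,\lambda,\beta,\alpha)$, or, as is done elsewhere in the paper, by first smoothing $f$, carrying out the computation, and then passing to the limit; in either case the contribution at infinity drops out precisely because $\alpha>2$ makes the reception area bounded. Absolute convergence of the sums over $i$ appearing above is taken for granted throughout, exactly as in the preceding discussion of $\sum_j\nabla_i\ss_j$.
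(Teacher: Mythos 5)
Your proof is correct and takes essentially the same route as the paper: the translation identity $\sum_i\nabla_i f(s_0)=-\nabla f(s_0)$ followed by entrywise integration by parts to obtain $\ss_0\BI$ (hence $\BD=\ss_0\BI-\BT$), and the observation that each $\nabla_i f(s_0)$ is radial so that $\BT$ is an integral of symmetric outer products $(z-z_i)\otimes(z-z_i)$. You simply make explicit a couple of points the paper leaves implicit (the origin of the translation identity and the vanishing of the boundary terms for the indicator $f$), which is fine.
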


\begin{proof}
From the definition of $\BT$, we can see that the sum \mbox{$\sum_i (z-z_i)\otimes \nabla_i f(s_0)$} leads to a symmetric matrix since
\begin{eqnarray*}
\BT&=&\alpha\int f'(s_0)\Big(\frac{s_0^2-s_0}{|z-z_0|^2}(z-z_0)\otimes(z-z_0)+\\
&&\sum_{i\neq 0}\frac{s_0s_i}{|z-z_i|^2}(z-z_i)\otimes(z-z_i)\Big)dz^2~,
\end{eqnarray*}
and the left hand side is made of \mbox{$(z-z_i)\otimes(z-z_i)$} which are symmetric matrices. This implies that $\BT$ is also symmetric.

We can see that 
$$
\sum_{i}\nabla_i f(s_0)=-\nabla f(s_0)~,
$$ 
and using integration by parts we have
\begin{eqnarray*}
\lefteqn{\int \sum_i z\otimes \nabla_i f(s_0)dz^2=-1\times}\\
&\Biggl[
\begin{array}{cc}
\int x \fpx f(s_0)dxdy&\int x \fpy f(s_0)dxdy\\
\int y \fpx f(s_0)dxdy&\int y \fpy f(s_0)dxdy
\end{array}
\Biggr]=
\Biggl[\begin{array}{cc}
\ss_0&0\\
0&\ss_0
\end{array}\Biggr]~,
\end{eqnarray*}
which is symmetric and equal to $\ss_0\BI$. The sum/difference of symmetric matrices is also a symmetric matrix and, therefore, $\BD$ is a symmetric matrix and $\BD=\ss_0\BI-\BT$. 
\end{proof}

Now, we will only consider grid patterns and, by virtue of a grid pattern, we can have 
$$
\E(\ss_i)=\ss_0=\int f(s_0)dz^2~,
$$ 
and \mbox{$\E(h(z))=\nu({\cal S})\ss_0$}. Under homothetic transformation, $\nu({\cal S})$ and $\ss_0$ are transformed but $\nu({\cal S})\ss_0$ remains invariant.

\begin{theorem}
If the pattern of the points in set ${\cal S}$ is optimal w.r.t. linear transformation of the set, $\BD=\ss_0\BI$ and $\BT=0$.
\end{theorem}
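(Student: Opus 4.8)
The plan is to turn the hypothesis ``optimal w.r.t.\ linear transformation'' into a first-order stationarity condition and then read the two claims straight off it. The quantity being optimized is the capacity $\E(h(z))=\nu({\cal S})\ss_0$, which is invariant under homothety; so optimality along the one-parameter family $\BM(t)=\BI+t\BA$ means that $\frac{\partial}{\partial t}\big(\nu({\cal S})\ss_0\big)\big|_{t=0}=0$ for \emph{every} direction matrix $\BA$. First I would assemble the two ingredients that are already in hand. The reception area varies according to the identity established just above, namely $\frac{\partial}{\partial t}\ss_0(t,\BA)\big|_{t=0}=\tr(\BA^T\BD)$ (combining $\tr(\BA^T\frac{\partial}{\partial\BA}\ss_0)=\frac{\partial}{\partial t}\ss_0(t,\BA)\big|_{t=0}$ with $\BD=\frac{\partial}{\partial\BA}\ss_0$). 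The density varies because the linear map $z\mapsto(\BI+t\BA)z$ multiplies planar areas by $\det(\BI+t\BA)=1+t\,\tr(\BA)+O(t^2)$, so $\nu({\cal S})\mapsto\nu({\cal S})/\det(\BI+t\BA)$ and hence $\frac{\partial}{\partial t}\nu({\cal S})\big|_{t=0}=-\nu({\cal S})\,\tr(\BA)$.

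Applying the product rule and then substituting the decomposition $\BD=\ss_0\BI-\BT$ from Theorem~3, together with $\tr(\BA)=\tr(\BA^T\BI)$, gives
$$
\frac{\partial}{\partial t}\big(\nu({\cal S})\ss_0\big)\Big|_{t=0}=\nu({\cal S})\big(\tr(\BA^T\BD)-\ss_0\,\tr(\BA)\big)=\nu({\cal S})\,\tr\!\big(\BA^T(\BD-\ss_0\BI)\big)=-\nu({\cal S})\,\tr(\BA^T\BT)~.
$$
Since a rotation leaves both $\ss_0$ and $\nu({\cal S})$ unchanged — the very observation used in the proof of Theorem~2 — the derivative in a general direction $\BA$ coincides with the one in the direction of its symmetric part, so it suffices to let $\BA$ range over symmetric matrices. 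Optimality then forces $\tr(\BA^T\BT)=0$ for every symmetric $\BA$; and since $\BT$ is itself symmetric (Theorem~3), taking $\BA=\BT$ yields $\tr(\BT^2)=\|\BT\|^2=0$, so $\BT=0$, whence $\BD=\ss_0\BI-\BT=\ss_0\BI$. These are the two assertions.

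I expect the only non-mechanical point to be the bookkeeping of how the \emph{density}, not just the reception area, scales under a non-homothetic linear map — i.e.\ making sure it is the derivative of the full scale-invariant quantity $\nu({\cal S})\ss_0$ that is set to zero, rather than that of $\ss_0$ alone (which, on its own, can always be increased by a homothety). An equivalent and perhaps cleaner route sidesteps this by restricting $\BM(t)$ to be volume-preserving ($\tr(\BA)=0$), which keeps $\nu({\cal S})$ fixed: stationarity of $\ss_0$ then reads $\tr(\BA^T\BD)=0$ for all traceless symmetric $\BA$, forcing the symmetric matrix $\BD$ to be a scalar multiple of $\BI$, and the scalar is pinned down to $\ss_0$ by $\tr(\BD)=2\ss_0$ from Theorem~2, so again $\BD=\ss_0\BI$ and $\BT=0$. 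Either way the argument needs no computation beyond the variation identity, the decomposition $\BD=\ss_0\BI-\BT$ with $\BT$ symmetric, the trace identity, and the elementary linear-algebra fact that a symmetric matrix orthogonal (in the trace inner product) to all symmetric matrices must vanish.
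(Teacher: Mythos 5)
Your proof is correct and takes essentially the same route as the paper: first-order stationarity of the scale-invariant product $\nu({\cal S})\ss_0$ under the linear group transformation, with $\partial\ss_0/\partial\BA=\BD$ and $\partial\nu({\cal S})/\partial\BA=-\nu({\cal S})\BI$, leading via the product rule to $\BD=\ss_0\BI$ and hence $\BT=0$. You only make the final step more explicit than the paper does (pairing the stationarity condition with $\BA=\BT$ and invoking the symmetry of $\BT$, or alternatively restricting to traceless directions and using $\tr(\BD)=2\ss_0$), which is a welcome tightening rather than a different argument.
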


\begin{proof}
The derivative of $\ss_0$ w.r.t. matrix $\BA$ is exactly equal to $\BD$. 
Similarly, under the same transformation
$$
\frac{\partial}{\partial t}\nu({\cal S})=\frac{1}{\det(\BI+\BA t)}\nu({\cal S})~,
$$
and for $\BA=\BI$, it can be written as 
$$
\nu'({\cal S})(t,\BI)=\nu({\cal S})/(1+t)^2~.
$$

In any case, the derivative of $\nu({\cal S})$ w.r.t. matrix $\BA$ is exactly equal to $-\BI\nu({\cal S})$. We also know that if the pattern is optimal w.r.t. linear transformation, the derivative of $\nu({\cal S})\ss_0$ w.r.t. to matrix $\BA$ shall be null. This implies that
$$
\nu({\cal S})\BD-\BI\nu({\cal S})\ss_0=0~,
$$
which leads to $\BD=\ss_0\BI$ and $\BT=0$.
\end{proof}

\begin{figure*}[!t]
\centering
\psfrag{a}{$\sqrt{3}d$}
\psfrag{b}{$2d$}
\psfrag{c}{$d$}
\includegraphics[scale=0.5]{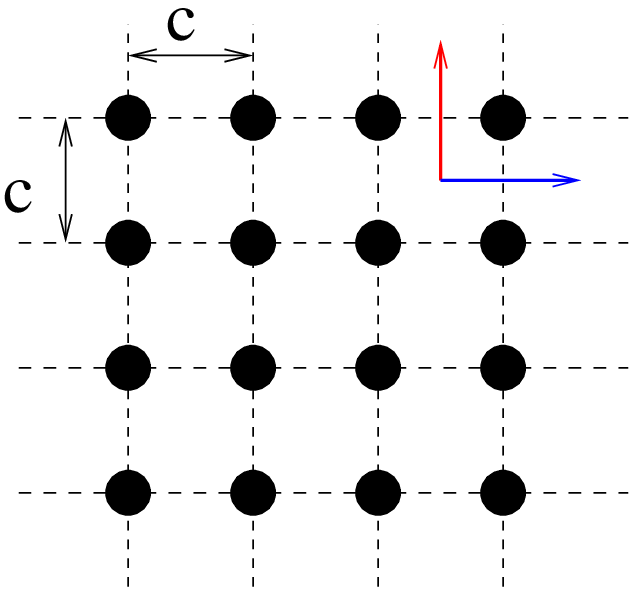}
\includegraphics[scale=0.5]{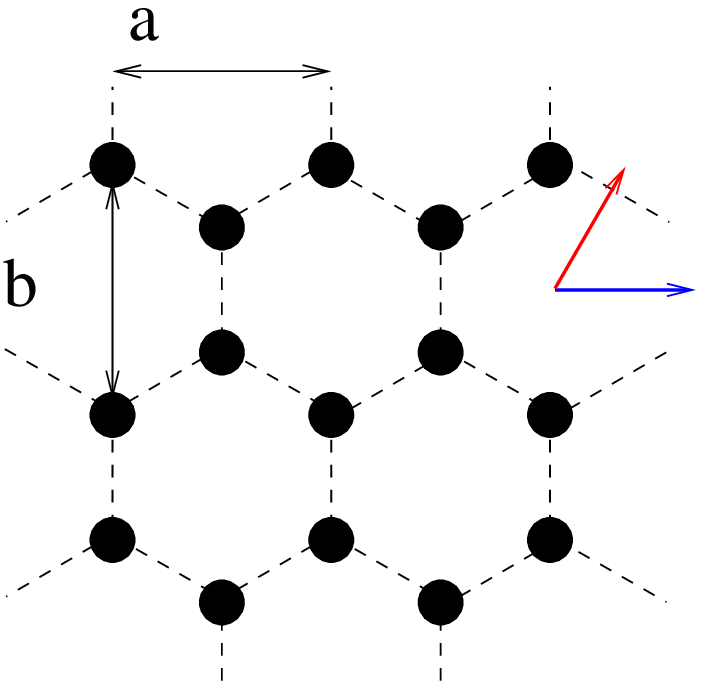}
\includegraphics[scale=0.5]{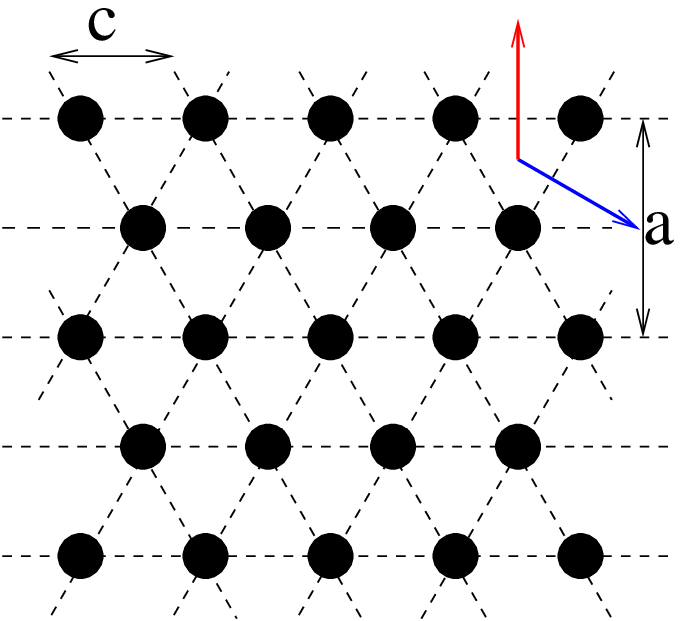}
\caption{Square, Hexagonal and Triangular grid patterns. The arrows (blue and red) represent the invariance of Eigen values w.r.t. isometric symmetries of the grids.}
\label{fig:grid_layouts}
\end{figure*}

We know that $\BT$ is symmetric and \mbox{$\BT=0$}. Thus, \mbox{$\tr(\BT)=0$}, {\it i.e.}, Eigen values are invariant by rotation. When a grid is optimal, we must have \mbox{$\BT=0$}. In any case, the matrix $\BT$ must be invariant w.r.t. isometric symmetries of the grid. On $2D$ plane, the grid patterns which satisfy this condition are square, hexagonal and triangular grids. The square grid is symmetric w.r.t. any horizontal or vertical axes of the grid and, in particular, with rotation of $\pi/2$ represented by $\BJ$. Therefore, the {\em Eigen system} must be invariant by rotation of $\pi/2$. This implies that the {\em Eigen values} are the same and therefore null since \mbox{$\tr(\BT)=0$}. Same argument also applies for the hexagonal grid with the invariance for $\pi/3$ rotation and for the triangular pattern with invariance for $2\pi/3$ rotation. 

\subsection{Reception Areas}
\label{sec:rx_area_2}

\begin{figure}[!t]
\centering
\psfrag{a}{$z_i$}
\psfrag{b}{$z$}
\psfrag{c}{$dz=J\frac{\nabla S_{i}(z)}{|\nabla S_{i}(z)|}\delta t$}
\psfrag{d}{${\cal C}(z_i,\beta,\alpha)$}
\psfrag{e}{$A(z_i,\lambda,\beta,\alpha)$}
\includegraphics[scale=0.65]{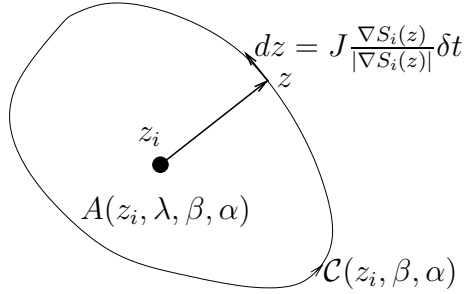}
\caption{Computation of the reception area of transmitter $i$.\label{fig:snr_gradient}}
\end{figure}

The simultaneous transmitters, {\it i.e.}, the set ${\cal S}$ is a set of points arranged in a grid pattern. We consider that, for every slot, the grid pattern is the same {\em modulo} a translation. We have covered grid layouts of square, hexagonal and triangle as shown in Fig. \ref{fig:grid_layouts}. Grids are constructed from $d$ which defines the minimum distance between neighboring transmitters and can be derived from the hop-distance parameter of a typical TDMA-based scheme. The density of grid points, $\lambda$, depends on $d$. However, the capacity, $c(z,\beta,\alpha)$, is independent of the value of $d$ or, for that matter, $\lambda$ as it is invariant for any homothetic transformation of the set of transmitters. 

Our aim is to compute the size of the reception area, $A(z_i,\lambda,\beta,\alpha)$, around each transmitter $i$. By consequence of the regular grid pattern, all reception areas are the same {\em modulo} a translation (and a rotation for the hexagonal pattern), and their surface area size, $\sigma(\lambda,\beta,\alpha)$, is the same.  

If ${\cal C}(z_i,\beta,\alpha)$ is the closed curve that forms the boundary of $A(z_i,\lambda,\beta,\alpha)$ and $z$ is a point on ${\cal C}(z_i,\beta,\alpha)$, we have
\begin{equation}
\sigma(\lambda, \beta,\alpha)=\frac{1}{2}\displaystyle\int\limits_{{\cal C}(z_i,\beta,\alpha)}\det(z-z_{i},dz)~,
\label{eq:area_integral}
\end{equation}
where $\det(a,b)$ is the determinant of vectors $a$ and $b$ and $dz$ is the vector tangent to ${\cal C}(z_i,\beta,\alpha)$ at point $z$. \mbox{$\det(z-z_i,dz)$} is the cross product of vectors $(z-z_{i})$ and $dz$ and gives the area of the parallelogram formed by these two vectors. Note that, \eqref{eq:area_integral} remains true if $z_{i}$ is replaced by any interior point of $A(z_i,\lambda,\beta,\alpha)$. 

The SIR $S_{i}(z)$ of transmitter $i$ at point $z$ is given by (\ref{eq:sinr}). We assume that at point $z$, $S_{i}(z)=\beta$. On point $z$ we can also define the gradient of $S_{i}(z)$, $\nabla S_{i}(z)$,
$$
\nabla S_{i}(z)=\left[\begin{array}{c}
				\frac{\partial}{\partial x}S_{i}(z)\\
				\frac{\partial}{\partial y}S_{i}(z)\end{array}\right]~.
$$
Note that $\nabla S_i(z)$ is inward normal to the curve ${\cal C}(z_i,\beta,\alpha)$ and points towards $z_i$. The vector $dz$ is co-linear with $J\frac{\nabla S_{i}(z)}{|\nabla S_{i}(z)|}$ where $J$ is the anti-clockwise rotation of $3\pi/2$ (or clockwise rotation of $\pi/2$) given by
$$
J=\left[\begin{array}{cc}
0 & 1\\
-1 & 0\end{array}\right]~.
$$ 
Therefore, we can fix $dz=J\frac{\nabla S_{i}(z)}{|\nabla S_{i}(z)|}\delta t$ and in (\ref{eq:area_integral}) 
\begin{align*}
\det(z-z_{i},dz)&=(z-z_i)\times J\frac{\nabla S_{i}(z)}{|\nabla S_{i}(z)|}\delta t\\
&=-(z-z_{i}).\frac{\nabla S_{i}(z)}{|\nabla S_{i}(z)|}\delta t~,
\end{align*}
where $\delta t$ is assumed to be infinitesimally small. The sequence of points $z(k)$ computed as 
\begin{align*}
z(0) & =z\\
z(k+1) & =z(k)+J\frac{\nabla S_{i}(z(k))}{|\nabla S_{i}(z(k))|}\delta t~,\end{align*} gives a discretized and numerically convergent parametric representation of ${\cal C}(z_i,\beta,\alpha)$ by finite elements. 

Therefore, (\ref{eq:area_integral}) reduces to
\begin{equation}
\sigma(\lambda, \beta,\alpha)\approx-\frac{1}{2}\sum_{k}(z(k)-z_{i}).\frac{\nabla S_{i}(z(k))}{|\nabla S_{i}(z(k))|}\delta t~,
\label{eq:area_integral_2}
\end{equation}
assuming that we stop the sequence $z(k)$ when it loops back on or close to the point $z$. 

The point, \mbox{$z(0)=z$}, can be found using Newton's method. First approximate value of $z$, required by Newton's method, can be computed assuming only one interferer nearest to the transmitter $i$ as discussed in the Appendix. The negative sign in (\ref{eq:area_integral_2}) is automatically negated by the dot product of vectors $(z(k)-z_{i})$ and $\nabla S_{i}(z(k))$.

\subsection{Capacity}

The expression to derive the capacity is
$$
c(z,\beta,\alpha)=\E(N(z,\beta,\alpha))=N(z,\beta,\alpha)=\lambda\sigma(\lambda,\beta,\alpha)~,
$$ 
where $\sigma(\lambda,\beta,\alpha)$ is computed using the above described method. 

\section{Medium Access Schemes Based on Exclusion Rules}
\label{sec:practical}

As we mentioned earlier, an accurate model of interference distribution, in case of exclusion schemes, is very hard to derive because of the correlation in the locations of simultaneous transmitters, {\it e.g.}, simultaneous transmitters separated by a certain minimum distance. We also mentioned a few approaches to model distribution of transmitters in exclusion schemes including Mat\'ern point process \cite{CSMA-Model,Weber3}  or, more recently, {\em SSI} point process \cite{Busson}.

Classical Mat\'ern point process is a location dependent thinning of PPP such that the remaining points are at least at a certain distance \mbox{$b>0$} from each other. Instead of distance,~\cite{CSMA-Model,Weber3} used received power level, a function of distance, in order to model CSMA based schemes. However, \cite{Busson} showed that Mat\'ern point process may lead to an underestimation of the density of simultaneous transmitters and proposed an {\em SSI} point process in which a node can transmit or, in other words, can be added to the set of simultaneous transmitters ${\cal S}$ if it is at least at a certain distance \mbox{$b>0$} from all active transmitters ({\it i.e.}, the transmitters already in the set ${\cal S}$). {\em SSI} point process can be used to study node coloring schemes but may result in a flawed representation of CSMA based schemes. In case of CSMA based schemes, a node senses the medium to detect if the signal level of interference is below a certain threshold and only transmits if it remains below that threshold during the randomly selected back-off period. Therefore, the {\em decision} of transmission depends on all nodes which are already active and transmitting on the medium. In order to address this inaccuracy in {\em SSI} point process model, \cite{Busson} proposed an {\em SSI$_k$} point process which ensures that a node, before transmitting on the medium, takes into account the interference from $k$ nearest active transmitters. However, very few analytical results are available on {\em SSI} and {\em SSI$_k$} point processes and most of the results are obtained via simulations. Therefore, we will use Monte Carlo simulation along with the analytical method proposed in \S \ref{sec:rx_area_2} to compute the capacity of node coloring and CSMA based schemes. 

In this section, we will discuss the models of node coloring and CSMA based schemes which we will employ in our Monte Carlo simulation later. In the following discussion, the set of all nodes in the network is {\cal N}. In practical implementation, this set is finite but in theory, it can be infinite but with a uniform density.

\subsection{Node Coloring Based Schemes}
\label{sec:tdma}

Node coloring schemes use a managed transmission scheme based on time division multiple access (TDMA) approach. The aim is to minimize the interference between transmissions that cause packet loss. These schemes assign colors to nodes that correspond to periodic slots, {\it i.e.}, nodes that satisfy a spatial condition, either based on physical distance or distance in terms of number of hops, will be assigned different colors. This condition is usually derived from the interference models of wireless networks such as unit disk graph (UDG) models. For example, in order to avoid collisions at receivers, all nodes within $k$ hops are assigned unique colors. Typical value of $k$  is $2$. A few practical implementations of node coloring schemes are as follows. In~\cite{unified}, authors proposed coloring based on RAND, MNF and PMNF algorithms. In RAND, nodes are colored in a random order whereas MNF and PMNF prioritize nodes according to the number of their neighbors. NAMA~\cite{NAMA} colors the nodes, in $2$-hop neighborhood, randomly using a hash number. In SEEDEX~\cite{SEEDEX}, nodes use random seed number in $2$-hop neighborhood to randomly elect the transmitter. FPRP~\cite{FPRP} is a five-phase protocol where nodes contend to allocate slots in $2$-hop neighborhood. DRAND is the distributed version of RAND and article~\cite{DRAND} shows its better performance as compared to SEEDEX and FPRP. \cite{tdma_fair} proposes a joint TDMA scheduling/load balancing algorithm for wireless multi-hop networks and shows that it can improve the performance in terms of multi-hop throughput and fairness. Most of these schemes use unit disk graph based interference model. However, success of a transmission depends on whether the SIR at the receiver is above a certain threshold. \cite{Derbel} is an example of a node coloring scheme which uses SIR based interference model. Note that, extremely managed transmission scheduling in node coloring schemes has significant overhead, {\it e.g.}, because of the control traffic or message passing required to achieve the distributed algorithms that resolve color assignment conflicts. 

In this article, instead of considering any particular scheme, we will present a model which ensures that transmitters use an exclusion distance in order to avoid the use of the same slot within a certain distance. This exclusion distance is defined in terms of euclidean distance $d$ which may be derived from the distance parameter of a typical TDMA-based scheme. Therefore, a slot cannot be shared within a distance of $d$ or, in other words, nodes transmitting in the same slot shall be located at a distance greater or equal to $d$ from each other. 

Following is a model of node coloring schemes which constructs the set of simultaneous transmitters, ${\cal S}$, in each slot (this is supposed to be done off-line so that transmission patterns periodically recur in each slot).
\begin{compactenum}
\item Initialize ${\cal M}={\cal N}$ and ${\cal S}=\emptyset$.
\item Randomly select a node $s_{i}$ from ${\cal M}$ and add it to the set ${\cal S}$, {\it i.e}, ${\cal S}={\cal S}\cup\{s_{i}\}$. Remove $s_{i}$ from the set ${\cal M}$.
\item Remove all nodes from the set ${\cal M}$ which are at distance less than $d$ from $s_{i}$.
\item If set ${\cal M}$ is non-empty, repeat from step $2$.
\end{compactenum}
Above described steps model a centralized or distributed node coloring scheme which {\em randomly} selects the nodes for coloring while satisfying the constraints of euclidean distance. Under given constraints, this model also tries to maximize the number of simultaneous transmitters in each slot and should give the maximum capacity achievable with any node coloring scheme which {\em may not} prioritize the nodes for coloring, {\it e.g.}, \cite{DRAND}.

\subsection{CSMA Based Schemes}
\label{sec:csma}

As compared to managed transmission schemes like node coloring, CSMA based schemes are simpler but are more demanding on the physical layer. Before transmitting on the channel, a node verifies if the medium is idle by sensing the signal level. If the detected signal level is below a certain threshold, medium is assumed idle and the node transmits its packet. Otherwise, it may invoke a random back-off mechanism and wait before attempting a retransmission. CSMA/CD (CSMA with collision detection) and CSMA/CA (CSMA with collision avoidance), which is also used in IEEE 802.11, are the modifications of CSMA for performance improvement. 

We will adopt a model of CSMA based scheme where nodes contend to access medium at the beginning of each slot. In other words, nodes transmit only after detecting that medium is idle. We assume that nodes defer their transmission by a tiny back-off time, from the beginning of a slot, and abort their transmission if they detect that medium is not idle. We also suppose that detection time and receive to transmit transition times are negligible and, in order to avoid collisions, nodes use randomly selected (but different) back-off times. Therefore, the main effect of back-off times is in the production of a random order of the nodes in competition.  

For the evaluation of the performance of CSMA based scheme, we will use the following simplified construction of the set of simultaneous transmitters ${\cal S}$. 
\begin{compactenum}
\item Initialize ${\cal M}={\cal N}$ and ${\cal S}=\emptyset$.
\item Randomly select a node $s_{i}$ from ${\cal M}$ and add it to the set ${\cal S}$, {\it i.e.}, ${\cal S}={\cal S}\cup\{s_{i}\}$. Remove $s_{i}$ from the set ${\cal M}$.
\item Remove all nodes from the set ${\cal M}$ which can detect a combined interference signal of power higher than $\theta$ (carrier sense threshold), from all transmitters in the set ${\cal S}$, {\it i.e.}, if 
$$
\sum_{s_i\in\cal{S}}|z_i-z_j|^{-\alpha}\geq \theta~, 
$$ 
remove $s_j$ from $\cal{M}$. Here, $z_i$ is the position of $s_i$ and $|z_i-z_j|$ is the euclidean distance between $s_i$ and $s_j$.
\item If set ${\cal M}$ is non-empty, repeat from step $2$.
\end{compactenum}
These steps model a CSMA based scheme which requires that transmitters do not detect an interference of signal level equal to or higher than $\theta$, during their back-off periods, before transmitting on the medium. At the end of the construction of set ${\cal S}$, some transmitters may experience interference of signal level higher than $\theta$. However, this behavior is in compliance with a realistic CSMA based scheme where nodes, which started their transmissions, or, in other words, are already added to the set ${\cal S}$ do not consider the increase in signal level of interference resulting from later transmitters.

\subsection{Reception Areas} 

We are not aware of an analytical closed-form expression for the probability distribution of signal level with node coloring or CSMA based schemes. Consequently, we do not have a closed form expression for the average size of the reception area of an arbitrary transmitter with these schemes. Therefore, it is evaluated via Monte Carlo simulation using the analytical method of \S \ref{sec:rx_area_2}. The value of $d$, in case of node coloring scheme, or $\theta$, in case of CSMA based scheme, can be tuned to obtain an average transmitter density of $\lambda$. 

\subsection{Capacity}

Similarly
$$
c(z,\beta,\alpha)=\E(N(z,\beta,\alpha))=\lambda \sigma(\lambda,\beta,\alpha)~,
$$ 
is also computed via Monte Carlo simulation. The capacity, $c(z,\beta,\alpha)$, is invariant for any homothetic transformation of $\lambda$ and, therefore, it is also independent of the values of medium access scheme parameters $\theta$ or $d$.

\section{Slotted ALOHA Scheme}
\label{sec:aloha}

In slotted ALOHA scheme, nodes do not use any complicated managed transmission scheduling and transmit their packets independently (with a certain medium access probability), {\it i.e.}, in each slot, each node decides independently whether to transmit or otherwise remain silent. Therefore, the set of simultaneous transmitters, in each slot, can be given by a uniform Poisson distribution of mean equal to $\lambda$ transmitters per unit square area~\cite{Jacquet:2009,SR-ALOHA,Weber2}. 

We can write \eqref{eq:sinr_condition} as
$$
|z-z_{i}|^{-\alpha}\geq \beta\underset{j\neq i}{\sum}|z-z_{j}|^{-\alpha}\label{eq:snr_relation_at_z}
$$
or 
$
{\cal W}(z,\{z_{i}\})\geq \beta {\cal W}(z,{\cal S}-\{z_{i}\})$, where ${\cal W}(z,{\cal S})=\underset{z_{j}\in {\cal S}}{\sum}|z-z_{j}|^{-\alpha}~.
$ 

\subsection{Distribution of Signal Levels}

\begin{figure}[!t]
\centering
\includegraphics[scale=0.65]{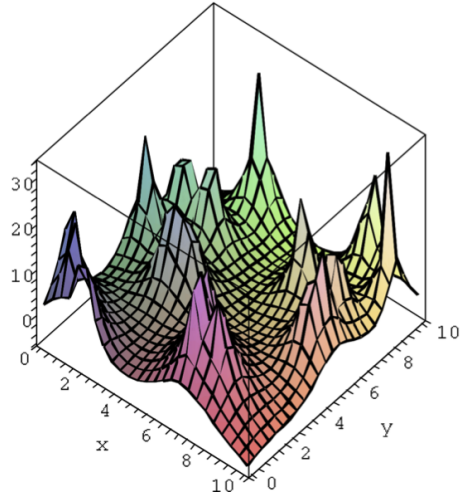}
\caption{Signal Levels (in dB's) for a random network with attenuation coefficient $\alpha=2.5$.
\label{fig:signal_levels}}
\end{figure}

Figure \ref{fig:signal_levels} shows the function ${\cal W}(z,{\cal S})$ for $z$ varying in the plane with ${\cal S}$ an arbitrary set of Poisson distributed transmitters. Figure \ref{fig:signal_levels} uses \mbox{$\alpha=2.5$}. It is clear that closer the receiver is to the transmitter, larger is the SIR. For each value of $\beta$ we can draw an area, around each transmitter, where its signal can be received with SIR greater or equal to $\beta$. Figure \ref{fig:reception_area} shows reception areas for the same set ${\cal S}$, as in Fig. \ref{fig:signal_levels}, for various values of $\beta$. As can be seen, the reception areas do not overlap for $\beta>1$ since there is only one dominant signal. For each value of $\beta$ we can draw, around each transmitter, the area where its signal is received with SIR greater or equal to $\beta$. Our aim is to find the average size of this area and how it is a function of $\lambda$, $\beta$ and $\alpha$.

${\cal W}(z,{\cal S})$ depends on ${\cal S}$ and hence is also a random variable. The random variable ${\cal W}(z,{\cal S})$ has a distribution which is invariant by translation and therefore does not depend on $z$. Therefore, we denote ${\cal W}(\lambda)\equiv{\cal W}(z,\lambda)$. Let $w({\cal S})$ be its density function. The set ${\cal S}$ is given by a $2D$ Poisson process with intensity $\lambda$ transmitters per slot per unit square area and Laplace transform of $w({\cal S})$, $\tilde{w}(\theta,\lambda)$, can be computed exactly. The Laplace transform, $\tilde{w}(\theta,\lambda)=\exp(\int(e^{-\theta r^{-\alpha}}-1)rdr)$, satisfies the identity
$$
\tilde{w}(\theta,\lambda)=\exp\left(-\pi\lambda\Gamma(1-\frac{2}{\alpha})\theta^{\frac{2}{\alpha}}\right)~,
$$
where $\Gamma(.)$ is the Gamma function. 

Note that, in all cases, $\tilde{w}(\theta,\lambda)$ is of the form $\exp(-\lambda C\theta^{\gamma})$ where $\gamma=\frac{2}{\alpha}$, and the expression of $C$ in case of $2D$ map is $C=\pi\Gamma(1-\gamma)$.

\begin{figure}[!t]
\centering
\includegraphics[clip,scale=0.65]{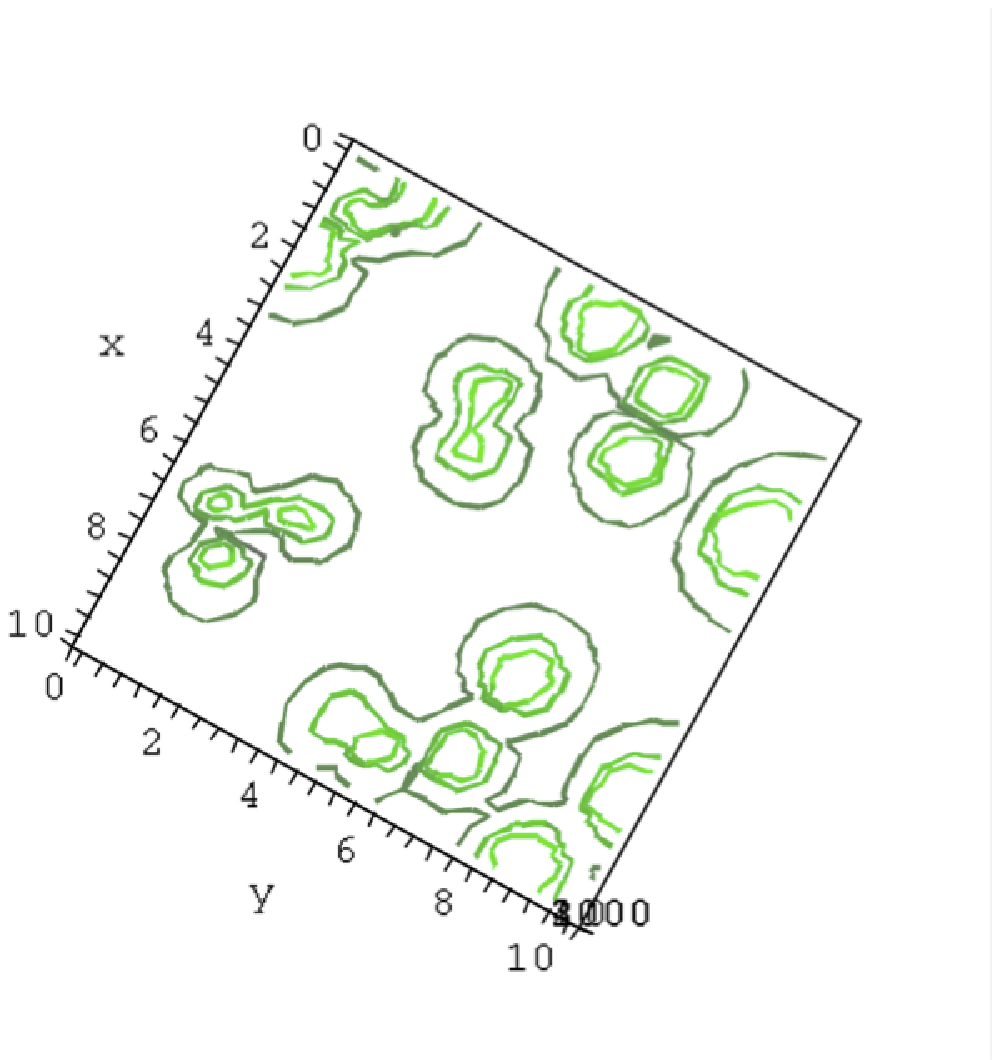}
\caption{Distribution of reception areas for various value of SIR. $\beta=1,4,10$ for situation of Fig. \ref{fig:signal_levels}.
\label{fig:reception_area}}
\end{figure}

From the above formula and by applying the inverse Laplace transformation, we get
$$
P({\cal W}(\lambda)<x)=\frac{1}{2i\pi}\underset{-i\infty}{\overset{+i\infty}{\int}}\frac{\tilde{w}(\theta,\lambda)}{\theta}e^{\theta x}d\theta
$$
Expanding $\tilde{w}(\theta,\lambda)=\underset{n\geq0}{\sum}\frac{(-C\lambda)^{n}}{n!}\theta^{n\gamma}$, we get
$$
P({\cal W}(\lambda)<x)=\frac{1}{2i\pi}\underset{n\geq0}{\sum}\frac{(-C\lambda)^{n}}{n!}\underset{-i\infty}{\overset{+i\infty}{\int}}\theta^{n\gamma-1}e^{\theta x}d\theta
$$
By bending the integration path towards negative axis
\begin{align*}
\frac{1}{2i\pi}\underset{-i\infty}{\overset{+i\infty}{\int}}\theta^{n\gamma-1}e^{\theta x}d\theta & =\frac{e^{i\pi n\gamma}-e^{-i\pi n\gamma}}{2i\pi}\underset{0}{\overset{\infty}{\int}}\theta^{n\gamma-1}e^{-\theta x}d\theta\\
 & =\frac{\sin(\pi n\gamma)}{\pi}\Gamma(n\gamma)x^{-n\gamma}\end{align*}
we get,
\begin{equation}
P({\cal W}(\lambda)<x)=\underset{n\geq0}{\sum}\frac{(-C\lambda)^{n}}{n!}\frac{\sin(\pi n\gamma)}{\pi}\Gamma(n\gamma)x^{-n\gamma}\label{eq:signal_pd}~.
\end{equation}

\subsection{Reception Areas}

Let $p(\lambda, r, \beta, \alpha)$ be the probability to receive a signal at distance $r$ with SIR at least equal to $\beta$. Therefore, we have 
$$
p(\lambda, r, \beta, \alpha)=P\left({\cal W}(\lambda) < \frac{r^{-\alpha}}{\beta}\right)
$$ 
and the average size of the reception area around an arbitrary transmitter with SIR at least equal to $\beta$ is 
$$
\sigma(\lambda, \beta, \alpha)=2\pi\int p(\lambda, r, \beta, \alpha)rdr~.
$$
Because of the obvious homothetic invariance, we can also write $p(\lambda,r,\beta,\alpha)=p(1,r\sqrt{\lambda},\beta,\alpha)$. Therefore, $\sigma(\lambda,\beta,\alpha)=\frac{1}{\lambda}\sigma(1,\beta,\alpha)$. Let $\sigma(1,\beta,\alpha)=2\pi\int P({\cal W}(1)<\frac{r^{-\alpha}}{\beta})rdr$. Assuming that $\tilde{w}(\theta,1)=\exp(-C\theta^\gamma)$ and using integration by parts, we get
$$
\sigma(1,\beta,\alpha)=\pi\alpha\int_0^{\infty}P\left({\cal W}(1)=\frac{1}{\beta r^{\alpha}}\right)\frac{r}{\beta r^\alpha}dr~.
$$
We use the fact that $P({\cal W}(1)=x)=\frac{1}{2\pi i}\int_{\cal C}w(\theta,1)e^{\theta x}d\theta$, where ${\cal C}$ is an integration path in the definition domain of $w(\theta,1)$, {\it i.e.}, parallel to the imaginary axis with $\Re(\theta)>0$. By changing the variable $x=(\beta r^{\alpha})^{-1}$ and inverting integrations, we get
\begin{align}
\sigma(1,\beta,\alpha)&=\frac{1}{2i}\int_{\cal C}w(\theta,1)\int_0^{\infty}e^{\theta x}(\beta x)^{-\gamma}dx \notag \\
&=\frac{1}{2i}\beta^{-\gamma}\Gamma(1-\gamma)\int_{\cal C}w(\theta,1)(-\theta)^{\gamma-1}d\theta~.\notag
\end{align}
Using $\tilde{w}(\theta,1)=\exp(-C\theta^{\gamma})$, and deforming the integration path to stick to the negative axis, we obtain
\begin{align}
\sigma(1,\beta,\alpha)&=\frac{e^{i\pi\gamma}-e^{-i\pi\gamma}}{2i}\beta^{-\gamma}\Gamma(1-\gamma)\int_0^{\infty}\exp(-C\theta^{\gamma})\theta^{\gamma-1}d\theta \notag \\
&=\sin(\pi \gamma)\beta^{-\gamma}\frac{\Gamma(1-\gamma)}{C\gamma}~.\notag
\end{align}
Using the expression for $C$, we have 
$$
\sigma(1,\beta,\alpha)=\frac{\sin(\pi\gamma)}{\pi\gamma}\beta^{-\gamma}~.
$$
Therefore, the average size of the reception area around an arbitrary transmitter $i$ with SIR at least equal to $\beta$ satisfies the identity
\begin{equation}
\sigma_{i}(\lambda,\beta,\alpha)=\frac{1}{\lambda}\frac{\sin(\frac{2}{\alpha}\pi)}{\frac{2}{\alpha}\pi}\beta^{-\frac{2}{\alpha}}~.
\label{eq:poisson_area}
\end{equation}

The reception area $\sigma_{i}(\lambda,\beta,\alpha)$ is inversely proportional to the density of transmitters $\lambda$ and the product $\lambda\sigma_{i}(\lambda, \beta, \alpha)$ is a function of $\beta$ and $\alpha$. We notice that when $\alpha$ approaches infinity, $\sigma_{i}(\lambda,\beta,\infty)$ approaches $1/\lambda$. This is due to the fact that when $\alpha$ is very large, all nodes other than the closest transmitter tend to contribute as a negligible source of interference and consequently the reception areas turn to be the Voronoi cells around every transmitter. This holds for all values of $\beta$. The average size of Voronoi cell being equal to the inverse density of the transmitters, $\frac{1}{\lambda}$, we get the asymptotic result. Note that when $\beta$ grows as $\exp(O(\alpha))$, we have $\sigma_{i}(\lambda,\beta,\alpha)\approx\frac{1}{\lambda}\exp(-\frac{2}{\alpha}\log(\beta))$, which suggests that the typical SIR as $\alpha\rightarrow\infty$ is of the order of $\exp(O(\alpha))$.

Secondly, as $\alpha$ approaches $2$, $\sigma_{i}(\lambda,\beta,2)$ approaches zero because $\sin(\frac{2}{\alpha}\pi)$ approaches zero. Indeed, the contribution of remote nodes tends to diverge and makes the SIR approach to zero. This explains why $\sigma_{i}(\lambda,\beta,2)$ approaches to zero for any fixed value of $\beta$.

\subsection{Capacity}

In this case, the analytical expressions (\ref{eq:poisson_hand_over_no}) and (\ref{eq:poisson_area}) lead to 
\begin{equation}
c(z,\beta,\alpha)=\E(N(z,\beta,\lambda))=\sigma(1,\beta,\alpha)~.
\label{eq:poisson_capacity}
\end{equation}

\section{Evaluation and Results}
\label{sec:simulations}

In order to approach an infinite map, we perform numerical simulations in a very large network spread over $2D$ square map with length of each side equal to $10000$ meters.

\subsection{Grid Pattern Based Schemes}

In case of grid pattern schemes, transmitters are spread over this network area in square, hexagonal or triangular pattern. For all grid patterns, we set $d$ equal to $25$ meters although it will have no effect on the validity of our conclusions as capacity, $c(z,\beta,\alpha)$, is independent of $\lambda$. To keep away edge effects, we compute the size of the reception area of transmitter $i$, located in the center of the network area: \mbox{$z_{i}=(x_{i},y_{i})=(0,0)$}. The network area is large enough so that the reception area of transmitter $i$ is close to its reception area in an infinite map. $\lambda$ depends on the type of grid and it is computed from the total number of transmitters spreading over the network area of $10000\times10000$ square meters. In our numerical simulations, we set $\delta t=0.01$.

\subsection{Medium Access Schemes Based on Exclusion Rules}

\subsubsection{Node Coloring Based Schemes}

Performance of node coloring based schemes is analyzed, via simulations, using the model specified in \S \ref{sec:tdma}. We set $d$ equal to $25$ meters. 

\subsubsection{CSMA Based Schemes}

In order to evaluate the capacity of CSMA based schemes, we perform simulations using the model specified in \S \ref{sec:csma}. The value of carrier sense threshold, $\theta$, is set equal to $1\times 10^{-5}$.

\subsubsection{Simulations}

We consider that nodes are uniformly distributed over the network area of $10000\times10000$ square meters. Simultaneous transmitters, in each slot, are selected according to the model of each medium access scheme. Considering the practical limitations introduced by the bounded network area, we use the following Monte Carlo method to evaluate $\sigma(\lambda,\beta,\alpha)$. We {\em only} compute the size of the reception area of a transmitter located nearest to the center of the network area and $\sigma(\lambda,\beta,\alpha)$ is the average of results obtained with $10000$ samples of node distributions. Similarly, $\lambda$ is also the average of the density of simultaneous transmitters obtained with these $10000$ samples of node distributions. Note that the models of medium access schemes select simultaneous transmitters randomly and transmitters are uniformly distributed over the network area. Therefore, using Monte Carlo method, {\it i.e.}, a large number of samples of node distributions and, with each sample, only measuring the reception area of a transmitter located nearest to the center of the network area gives an accurate approximation of $\sigma(\lambda,\beta,\alpha)$ in an infinite map with given values of $d$ or $\theta$. 

It can be argued that, in case of CSMA based scheme, density of simultaneous transmitters is higher on the boundaries of the network area, because of lower signal level of interference, as compared to the central region. The network area is very large and we observed that the difference, in spatial density of simultaneous transmitters, on the boundaries and central region is negligible. We also know that the capacity, $c(z,\beta,\alpha)$, is independent of $\lambda$ which depends on $d$ or $\theta$. However, if the node density is very low, it will also have an impact on the packing (density) of simultaneous transmitters in the network. Therefore, $\lambda$ should be maximized to the point where no additional transmitter can be added to the network under given values of $d$ or $\theta$. This can be achieved by keeping the node density very high, {\it e.g.}, we observed that the node density of $1$ node per square meter is sufficient and further increasing the node density does not increase $\lambda$. In order to keep away the edge effects, values of $d$ or $\theta$ are chosen such that $\lambda$ is sufficiently high and edge effects have minimal effect on the central region of the network.

\subsection{Slotted ALOHA Scheme}

In case of slotted ALOHA scheme, capacity, $c(z,\beta,\alpha)$, is computed from analytic expressions (\ref{eq:poisson_area}) and (\ref{eq:poisson_capacity}).

\begin{figure*}[!t]
\centering
\subfloat[$\beta$ is varying and $\alpha$ is fixed at $4.0$.]{
	\hspace{-0.85 cm}
	\includegraphics[scale=1]{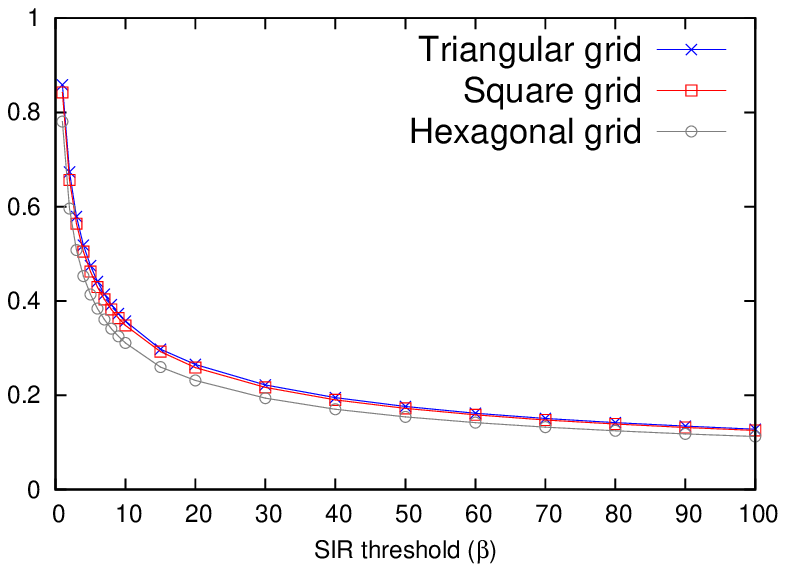}
	\hspace{-0.85 cm}
	\includegraphics[scale=1]{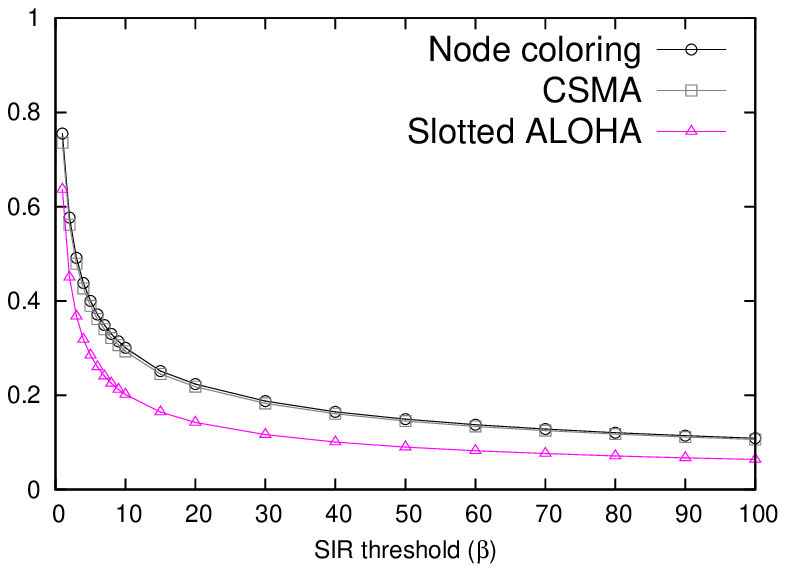}
}

\subfloat[$\beta$ is fixed at $10.0$ and $\alpha$ is varying.]{
	\hspace{-0.85 cm}
	\includegraphics[scale=1]{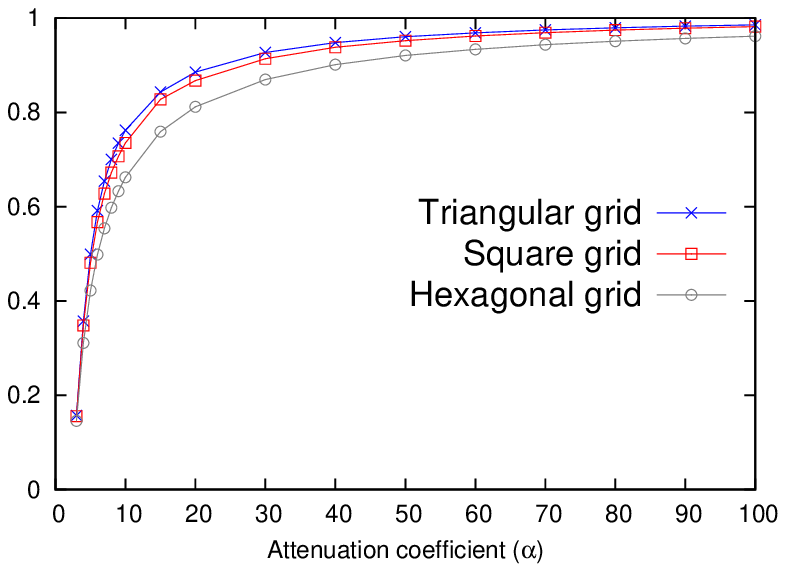}
	\hspace{-0.85 cm}
	\includegraphics[scale=1]{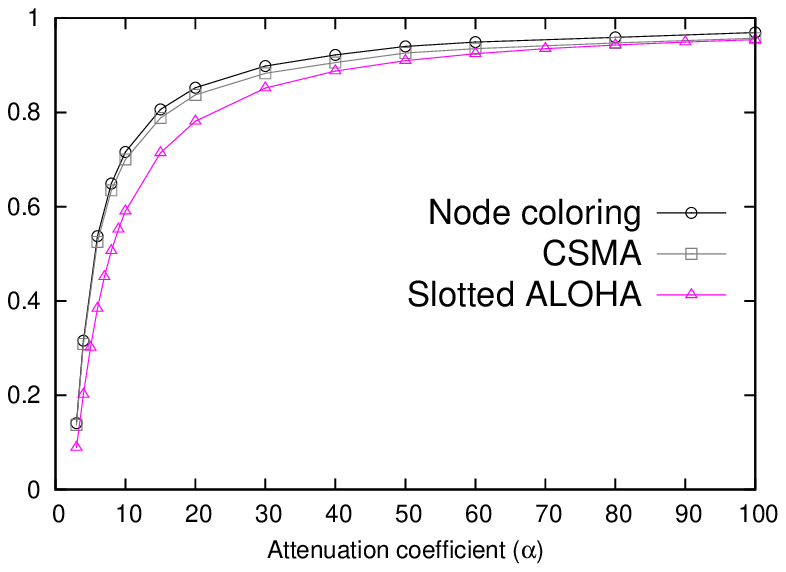}
}
\caption{Capacity, $c(z,\beta,\alpha)$, of grid pattern (triangular, square and hexagonal), node coloring, CSMA and slotted ALOHA based medium access schemes.
\label{fig:comparison}}
\end{figure*}

\subsection{Observations}

The values of SIR threshold, $\beta$, and attenuation coefficient, $\alpha$, depend on the underlying physical layer or system parameters and are usually fixed and beyond the control of network/protocol designers. However, to give the reader an understanding of the influence of these parameters on the capacity, $c(z,\beta,\alpha)$, of different medium access schemes, we assume that these parameters are variable. 

Figure \ref{fig:comparison}(a) shows the comparison of capacity, $c(z,\beta,\alpha)$, with grid patterns, node coloring, CSMA and slotted ALOHA schemes with $\beta$ varying and \mbox{$\alpha=4.0$}. Similarly, Fig. \ref{fig:comparison}(b) shows the comparison of these medium access schemes with \mbox{$\beta=10.0$} and $\alpha$ varying. We know that as $\alpha$ approaches infinity, reception area around each transmitter turns to be a Voronoi cell with an average size equal to $1/\lambda$. Therefore, as $\alpha$ approaches infinity, $c(z,\beta,\alpha)$ approaches 1. For slotted ALOHA scheme, (\ref{eq:poisson_area}) and (\ref{eq:poisson_capacity}) also arrive at the same result. For other medium access schemes, we computed $c(z,\beta,\alpha)$ with $\alpha$ increasing up to $100$ and from the results, we can observe that asymptotically, as $\alpha$ approaches infinity, $c(z,\beta,\alpha)$ approaching 1 is true for all schemes. 

\subsubsection{Optimal Local Capacity in Wireless Ad Hoc Networks}

From the results, we can see that the maximum capacity in wireless ad hoc networks can be obtained with triangular grid pattern based medium access scheme. In order to quantify the improvement in capacity by triangular grid pattern scheme over other schemes, we perform a scaled comparison of triangular grid pattern, slotted ALOHA, node coloring and CSMA based schemes which is obtained by dividing the capacity, $c(z,\beta,\alpha)$, of all these schemes with the capacity, $c(z,\beta,\alpha)$, of triangular grid pattern scheme. Figure \ref{fig:improvement}  shows the scaled comparison with $\beta$ and $\alpha$ varying. It can be observed that triangular grid pattern scheme can achieve, {\em at most}, double the capacity of slotted ALOHA scheme. However, node coloring and CSMA based medium access schemes can achieve almost \mbox{$85\sim90\%$} of the optimal capacity obtained with triangular grid pattern scheme. 

\subsubsection{Observations on Node Coloring Scheme}

Triangular grid pattern based medium access scheme can be visualized as an optimal node coloring which ensures that transmitters are exactly at distance $d$ from each other whereas, in case of random node coloring, transmitters are selected randomly and only condition is that they must be at a distance greater or equal to $d$ from each other. The exclusion region around each transmitter is a circular disk of radius $d/2$ with transmitter at the center. Note that the disks of simultaneous transmitters shall not overlap. The triangular grid pattern can achieve a packing density of \mbox{$\pi/\sqrt{12}\approx0.9069$}. The packing density is defined as the proportion of network area covered by the disks of simultaneous transmitters. On the other hand, random packing of disks, which is the case in random node coloring, can achieve a packing density in the range of \mbox{$0.54\sim0.56$} only~\cite{disk,Busson}. We have seen in the results that even this sub-optimal packing of simultaneous transmitters by random node coloring scheme can achieve almost similar capacity as obtained with optimal packing by triangular grid pattern based medium access scheme. 

\subsubsection{Observations on CSMA Based Scheme}

We observe that the capacity with CSMA based scheme is slightly lower (by approximately $3\%$) as compared to node coloring based medium access scheme and this is irrespective of the value of carrier sense threshold. The reason of slightly lower capacity with CSMA based scheme is that the exclusion rule is based on carrier sense threshold, rather than the distance in-between simultaneous transmitters. With carrier sense baed exclusion rule, CSMA based scheme may not allow to pack more transmitters, in each slot, that would have been possible with node coloring schemes. In other words, CSMA may result in a lower packing density of simultaneous transmitters as compared to node coloring scheme. This can also be observed by comparing the densities of {\em SSI} and {\em SSI$_k$} point processes in \cite{Busson} and also explains the slightly lower capacity of CSMA based scheme as compared to node coloring scheme. However, as $\alpha$ approaches infinity, $\lambda$ with CSMA based scheme approaches the node density and reception area around each transmitter also becomes a Voronoi cell with an average size equal to the inverse of node density. In fact, asymptotically, as $\alpha$ approaches infinity, capacity, $c(z,K,\alpha)$, approaches $1$.

\begin{figure*}[!t]
\centering
\subfloat[$K$ is varying and $\alpha$ is fixed at $4.0$.]{
	\hspace{-0.85 cm}
	\includegraphics[scale=1]{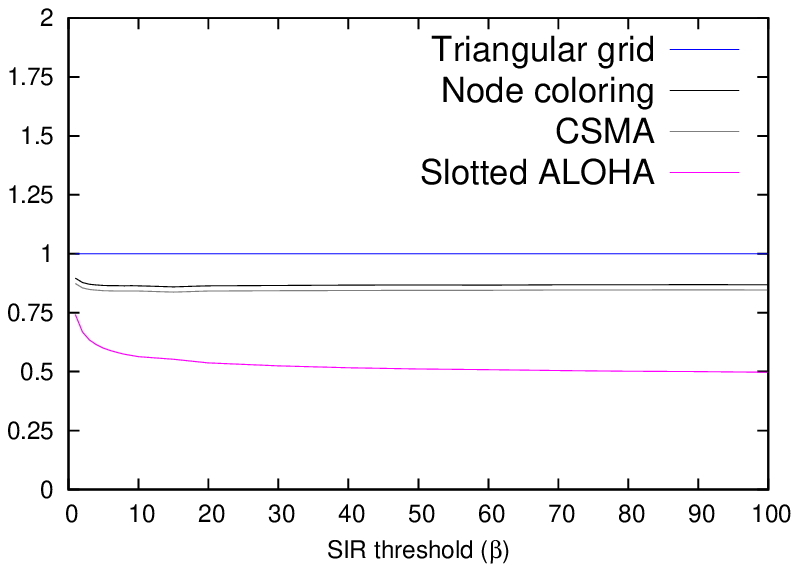}
}
\subfloat[$K$ is fixed at $10.0$ and $\alpha$ is varying.]{
	\hspace{-0.85 cm}
	\includegraphics[scale=1]{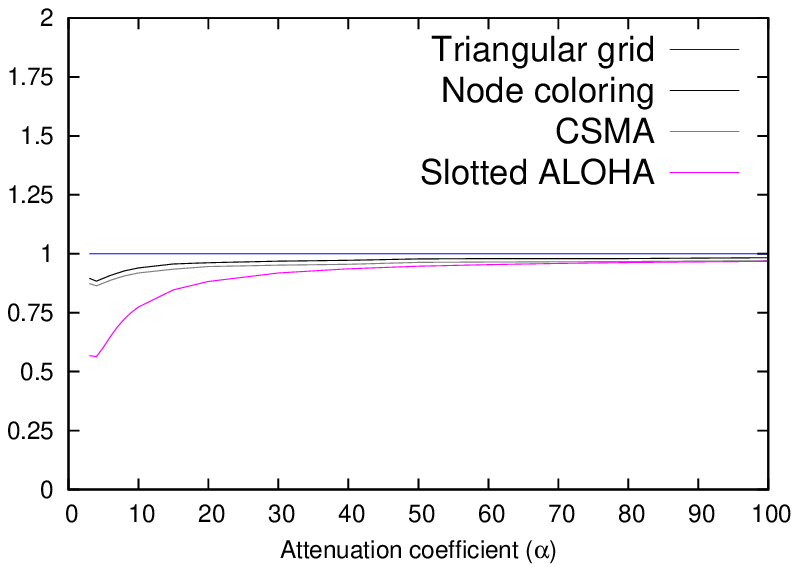}
}
\caption{Scaled comparison of triangular grid pattern, node coloring, CSMA and slotted ALOHA based schemes. 
\label{fig:improvement}}
\end{figure*}

\section{Current Limitations and Future Extensions}
\label{sec:future}

In future, we will extend this work to multi-hop networks. A medium access scheme which achieves higher local capacity should also be able to achieve higher end-to-end capacity in multi-hop networks. For example, consider that $\lambda$ is normalized across all medium access schemes to $1$. Therefore, higher local capacity means higher $\sigma(1,\beta,\alpha)$ which has an impact on the range of transmission and the number of hops required to reach the destination. The analysis to establish exact bounds on end-to-end capacity with different medium access schemes in multi-hop networks will be challenging as we will have to take into account the impact of routing schemes on capacity as well as various parameters like hop length, number of hops and density of simultaneous transmitters which are interrelated. 

The analysis presented here do not take into account fading and shadowing effects. Some results with fading are available, {\it e.g.}, for Poisson distribution of transmitters~\cite{Weber2,Bartek,Haenggi}. Our analysis, in case of slotted ALOHA, can take into account fading by using the results of~\cite{Jacquet:2009}. Nevertheless, analysis of all medium access schemes, discussed here, under the common framework, such as local capacity, is lacking. 

\section{Conclusions}
\label{sec:conclude}

We evaluated the performance of wireless ad hoc networks under the framework of local capacity. We used analytical tools, based on realistic interference model, to evaluate the performance of slotted ALOHA and grid pattern based medium access schemes and we used Monte Carlo simulation to evaluate node coloring and CSMA based schemes. Our analysis implies that maximum local capacity in wireless ad hoc networks can be achieved with grid pattern based schemes and our results show that triangular grid pattern outperforms square and hexagonal grids. Moreover, compared to slotted ALOHA, which does not use any significant protocol overhead, triangular grid pattern can only increase the capacity by a factor of $2$ or less whereas CSMA and node coloring can achieve almost similar capacity as the triangular grid pattern based medium access scheme.  

The conclusion of this work is that improvements above ALOHA are limited in performance and may be costly in terms of protocol overheads and that CSMA or node coloring can be very good candidates. Therefore, attention should be focused on optimizing existing medium access schemes and designing efficient routing strategies in case of multi-hop networks. Note that, our results are also relevant when nodes move according to an i.i.d. mobility process such that, at any time, distribution of nodes in the network is homogeneous.

\appendix

\subsection{Locating the Starting Point $z$ on the Closed Curve Bounding the Reception Area}

The SIR $S_{i}(z)$ at point $z$ should be greater or at least equal to $\beta$. We assume that it is equal to $\beta$. Therefore, at point $z$,  we have
$$
\frac{|z-z_{i}|^{-\alpha}}{\underset{j\neq i}{\sum}|z-z_{j}|^{-\alpha}}=\beta~.
$$
Our aim is to find the coordinates of point $z$ which satisfy the above relation. To simplify computation of point $z$ on the closed curve, bounding reception area of transmitter $i$ located at $z_{i}=(x_{i},y_{i})$, its $y$ coordinate can be fixed such that $z=(x,y)=(x,y_{i})$. This reduces the above equation to
\begin{equation}
\frac{|x-x_{i}|}{\underset{j\neq i}{\sum}|z-z_{j}|}-\beta=0~.
\label{eq:snr_newton_eq}
\end{equation}
Equation \eqref{eq:snr_newton_eq} is a function of variable $x$ and can be solved using Newton's Method.
\begin{quote}
\textit{Remark: Newton's Method: Given a function $f(x)$ and its derivative $f'(x)$, begin with a first guess $x_{0}$. Provided the function is reasonably well-behaved, a better approximation $x_{1}$ is $:=x_{0}-\frac{f(x_{0})}{f'(x_{0})}$. The process is repeated until a sufficiently accurate value is reached:}
\end{quote}
\begin{equation}
x_{n+1}=x_{n}-\frac{f(x_{n})}{f'(x_{n})}~.
\label{eq:newton_method}
\end{equation}

\begin{figure*}
\centering
\psfrag{a}{$z_i$}
\psfrag{b}{$z$}
\psfrag{c}{$z_j$}
\psfrag{d}{$r_i$}
\psfrag{e}{$r_j$}
\psfrag{f}{$d$}
\subfloat[Square and Triangular grid based schemes.]{
\centering
\includegraphics[scale=0.8]{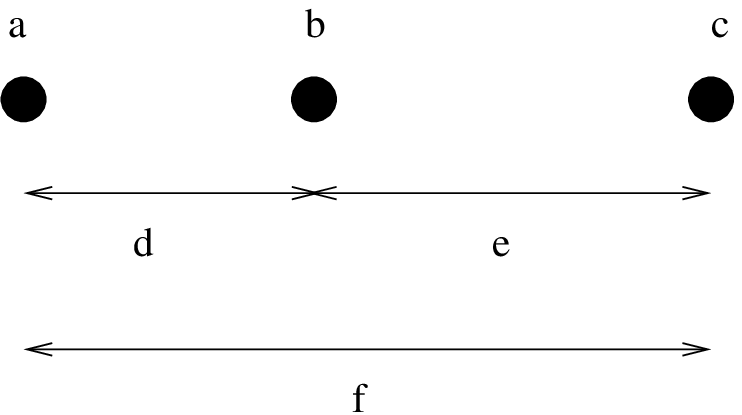}
}
\subfloat[Hexagonal grid pattern based scheme.]{
\centering
\includegraphics[scale=0.8]{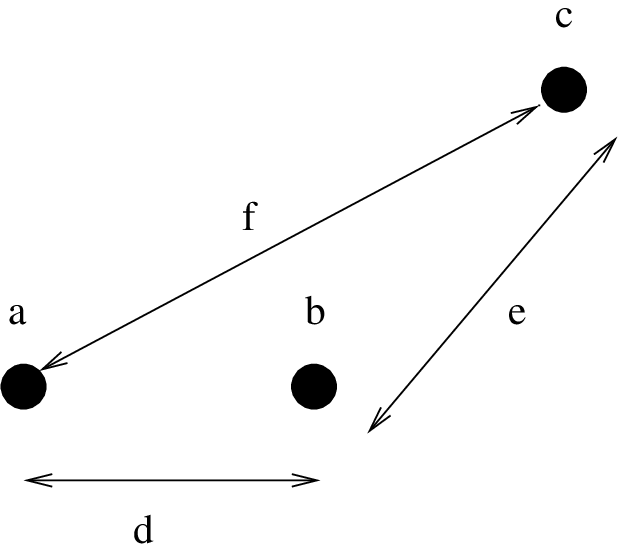}
}
\caption{Geometric representation of finding the first approximation of point $z=(x_{0},y_{i})$.
\label{fig:find_x0}}
\end{figure*}

From \eqref{eq:snr_newton_eq}
$$
f(x)=\frac{|x-x_{i}|^{-\alpha}}{\underset{j\neq i}{\sum}|z-z_{j}|^{-\alpha}}-\beta~,
$$
We assume that $g=r_{i}^{-\alpha}$ and $h=\underset{j\neq i}{\sum}r_{j}^{-\alpha}$ where $r_{i}=|x-x_{i}|$ and $r_{j}=|z-z_{j}|$.

Newton's Method requires the derivative of $f(x)$ which is computed as below.
$$
\frac{d}{dx}g=-\alpha\frac{x-x_{i}}{r_{i}^{\alpha+2}}
$$
$$
\frac{d}{dx}h=\underset{j\neq i}{\sum}-\alpha\frac{z-z_{j}}{r_{j}^{\alpha+2}}
$$
The first derivative of the function $f(x)$ is,
\begin{align*}
f'(x) & =\frac{1}{h^{2}}(h\frac{d}{dx}g-g\frac{d}{dx}h)\\
 & =\frac{\left((\underset{j\neq i}{\sum}r_{j}^{-\alpha})(-\alpha\frac{x-x_{i}}{r_{i}^{\alpha+2}})-(r_{i}^{-\alpha})(-\alpha\frac{z-z_{j}}{r_{j}^{\alpha+2}})\right)}{\frac{1}{\left[\underset{j\neq i}{\sum}r_{j}^{-\alpha}\right]}}
 \end{align*}

Newton's Method also requires first approximation of the root, $x_{0}$. An approximate value, closer to the actual root, can significantly reduce the number of iterations in Newton's Method. 

In all three types of grid networks, the transmitter closest to $i$, hereafter referred to as $j$, lies at distance $d$ and hence can give the best estimate $x_{0}$. For first approximation $x_{0}$, we can ignore all other transmitters in the network. In this case,

\begin{align*}
\frac{|z-z_{i}|^{-\alpha}}{|z-z_{j}|^{-\alpha}} & \geq \beta\\
\frac{r_{i}^{-\alpha}}{r_{j}^{-\alpha}} & \geq \beta:i\neq j~,
\end{align*}
where $r_{i}=|z-z_{i}|$ and $r_{j}=|z-z_{j}|$ and we get
$$
\frac{r_{j}}{r_{i}}\leq(\beta)^{\frac{1}{\alpha}}
$$

The location of transmitters $i$ and $j$ and point $z$ in the plane form three corners of a triangle with angle $\theta$ equal to $0$ in case of square and triangular grid and $\pi/6$ radians in case of hexagonal grid layout. Figure \ref{fig:find_x0} shows the location of transmitters $z_{i}$ and $z_{j}$, point $z$ and distances $r_{i}$, $r_{j}$ and $d$. Using above relation between $r_{i}$ and $r_{j}$ and the Law of Cosines, we get the solution of $r_{i}$ as

$$
r_{i}=\frac{-B\pm\sqrt{B^{2}-4AC}}{2A}~,
$$ 

where $A=1-K^{\frac{2}{\alpha}}$, $B=-2.d.\cos(\theta)$ and  $C=d^{2}$ where $d$ is the distance between transmitters $i$ and $j$ and is a known parameter of the grid layout. 
\begin{quote}
\textit{Remark: Select positive value of $r_{i}$ as the solution of the above quadratic equation. Using $x_{0}=x_{i}+r_{i}$ as the first approximate solution in Newton's Method \eqref{eq:newton_method}, and after a few iterations, we can get a sufficiently accurate value $x_{n+1}$ which will be the $x$ coordinate of the point $z$. }

\textit{The coordinates of point $z$ will be: $(x_{n+1},y_{i})$.}
\end{quote}

\bibliographystyle{hieeetr}
\bibliography{local_capacity_arxiv}

\end{document}